\documentclass[journal,a4paper]{IEEEtran}
\usepackage{algpseudocode}

\IEEEoverridecommandlockouts

\usepackage{cite}
\usepackage{lmodern,textcomp}
\usepackage{graphics}
\usepackage{multicol}
\usepackage{lipsum}
\usepackage{color}
\usepackage{amsthm}
\usepackage{amssymb}
\usepackage{mathtools}
\usepackage{amsbsy}
\usepackage{setspace}
\usepackage{float}
\usepackage{lettrine}
\usepackage{newclude}
\usepackage[normalem]{ulem}
\usepackage{latexsym}
\usepackage{multirow}
\usepackage{rotating}
\usepackage{booktabs}
\usepackage{wasysym}
\usepackage{mathrsfs}
\usepackage{bbm} 
\usepackage{epsfig,cite,amsfonts,psfrag}
\usepackage{apptools}
\usepackage{graphicx}
\usepackage[table]{xcolor}
\usepackage{chngcntr}
\usepackage{blindtext}

\usepackage{hyperref}
\def\BibTeX{{\rm B\kern-.05em{\sc i\kern-.025em b}\kern-.08em
    T\kern-.1667em\lower.7ex\hbox{E}\kern-.125emX}}
    
\usepackage{optidef}
\usepackage{accents}
\usepackage{dsfont}
\usepackage{siunitx}
\usepackage{overpic}
\usepackage{amsmath,amssymb,amsfonts}
\usepackage{comment}
\usepackage{graphicx}
\usepackage{textcomp}
\usepackage{xcolor}
\usepackage{optidef}

\usepackage{cite}
\usepackage{subcaption}
\usepackage{caption}

\usepackage{lipsum}
\usepackage{overpic}
\usepackage{amssymb}
\usepackage{amsmath}
\usepackage{array}
\usepackage{color}
\usepackage{url}
 
\usepackage{optidef}
\usepackage{times}
\usepackage{fancyhdr,graphicx}
\usepackage[ruled,linesnumbered,noend]{algorithm2e}

\usepackage{algorithmicx}
\usepackage{algcompatible}

\IEEEoverridecommandlockouts
\usepackage{cite}
\usepackage{graphicx}
\usepackage{url}
\usepackage{relsize}
\usepackage{hhline}
\usepackage{nicefrac}
\usepackage{amssymb,bm,upgreek}
\usepackage[table]{xcolor}
\usepackage{multirow}

\theoremstyle{plain}

\newtheorem{remark}{Remark}
\newtheorem{proposition}{Proposition}

\newcommand*{\LongState}[1]{\STATE
\parbox[t]{0.9\linewidth-\algorithmicindent-\algorithmicindent}{#1\strut}}

\newcommand{\vect}[1]{\boldsymbol{#1}}

\def\Ttran{\tiny \mbox{$\mathrm{T}$}}

\begin{document}

\title{Quantized Precoding for Maximizing Sum Rate in MU-MIMO Systems with Constrained Fronthaul}

\author{\IEEEauthorblockN{Yasaman Khorsandmanesh, \textit{Student Member, IEEE}, Alva Kosasih, \textit{
Member, IEEE}, Emil Björnson, \textit{Fellow, IEEE}, and Joakim Jaldén, \textit{Senior Member, IEEE}.
\thanks{Y. Khorsandmanesh and E. Björnson are with the Division of Communication Systems, KTH Royal Institute of Technology, Stockholm, Sweden (E-mails: \{yasamank, emilbjo\}@kth.se). A. Kosasih is with Nokia Standards, Espoo, Finland (E-mails: alva.kosasih@nokia.com). J. Jaldén is with the Division of Information Science and Engineering, KTH Royal Institute of Technology, Stockholm, Sweden (E-mail: jalden@kth.se).}
\thanks{This work was supported by the Knut and Alice Wallenberg Foundation through the WAF program.}
\thanks{A preliminary version of this work was presented at the IEEE International Conference on Communications (ICC), Rome, Italy, June 2023 \cite{khorsandmanesh2023fronthaul}.}
}}

\maketitle

\begin{abstract}  
This paper studies a downlink multi-user multiple-input multiple-output (MU-MIMO) system, where the precoding matrix is computed at a baseband unit (BBU) and then transmitted to the remote antenna array over a limited-capacity digital fronthaul. The limited bit resolution of the fronthaul introduces quantization effects that are explicitly modeled. We propose a novel sum rate maximization framework that directly incorporates the quantizer's constraints into the precoding design. The resulting maximization problem, a non-convex mixed-integer program, is addressed using a new iterative algorithm inspired by the weighted minimum mean square error (WMMSE) methodology. The precoding optimization subproblem is reformulated as an integer least-squares problem and solved using a novel sphere decoding (SD) algorithm.
Additionally, a low-complexity expectation propagation (EP)-based method is introduced to enable the practical implementation of quantized precoding in MU-massive MIMO (MU-mMIMO) systems. Furthermore, numerical evaluations demonstrate that the proposed precoding schemes outperform conventional approaches that optimize infinite-resolution precoding followed by element-wise quantization. We also propose a heuristic quantization-aware precoding method with comparable complexity to the baseline but superior performance. In particular, the EP-based approach offers near-optimal performance with substantial complexity reduction, making it well-suited for real-time MU-mMIMO applications. 
\end{abstract}

\begin{IEEEkeywords}
Sum rate maximization, constrained fronthaul, weighted minimum mean square error, quantization-aware precoding, sphere decoding, expectation propagation.
\end{IEEEkeywords}

\section{Introduction}

Multiple-input multiple-output (MIMO) systems enable high data rates through spatial multiplexing of multiple user equipments (UEs) on the same time-frequency resource \cite{Gesbert2007a}. Given a knowledge of channel state information (CSI), a base station (BS) equipped with multiple antennas can transmit simultaneously to several UEs with different directivities, such that desired signals can be pointed towards the UE while inter-UE interference can be suppressed. This is known as precoding design.
In precoding design, it is common to maximize the sum rate, or a weighted sum rate, under a constraint on the total transmit power \cite{bjornson2012robust}; however, sum rate maximization is known to be NP hard \cite{liu2010coordinated}. One popular practical approach for sum-rate maximization is the iterative weighted minimum mean square error (WMMSE) algorithm, which finds locally optimal solutions with affordable complexity \cite{shi2011iteratively}.

A 5G BS typically consists of two main components: an advanced antenna system (AAS) and a baseband unit (BBU). The AAS is a box containing the antenna elements and their respective radio units (RUs). The BBU performs the digital processing related to the received uplink data and transmitted downlink data. The AAS and BBU are connected through a digital fronthaul. The integration of antennas and radios into a single box has made multi-user MIMO (MU-MIMO) practically feasible \cite{Bjornson2019d} and enables the BBU to be virtualized in an edge cloud by migration to the centralized radio access network architecture \cite{peng2015fronthaul}. The new implementation bottleneck in these systems is the limited fronthaul capacity and the quantization errors it creates. The uplink/downlink data and combining/precoding coefficients are sent over this digital fronthaul and therefore must be quantized to a finite resolution. 
\subsection{Prior Work}
Downlink  MU-MIMO systems have been widely studied in previous literature regarding impairments in analog hardware \cite{Bjornson2014a} and the effect of low-resolution digital-to-analog converters (DACs) \cite{jacobsson2017quantized}. These prior works are characterized by distortion created either in the RU,  in the analog domain, or in the converters. Therefore, the transmitted signal is distorted after the precoding. The effect of limited fronthaul capacity is studied in \cite{parida2018downlink}, but the precoding design was not quantized. In \cite{khorsandmanesh2023optimized}, the authors proposed a fronthaul quantization-aware precoding design that minimizes the sum MSE, which will generally not maximize the sum rate. Many previous studies suggest designing a precoding matrix by maximizing the sum rate, often using the WMMSE approach; see \cite{christensen2008weighted,zhao2022rethinking} and references therein. Nevertheless, they consider ideal hardware or other types of distortion than precoding quantization.

Alternatively, symbol-level precoding techniques, where the transmitted signals are designed based on the knowledge of both CSI and the data symbols, have been recently proposed for downlink MU-MIMO systems with low-resolution DACs \cite{masouros2010correlation,tsinos2018symbol}. In \cite{tsinos2018symbol}, a novel symbol-level precoding technique is developed supporting systems with DACs of any resolution, and it is applicable for any signal constellation.  Unlike these non-linear precoding schemes, our linear precoding design is independent of the data and can be used for an arbitrary number of data symbols, leading to vastly lower complexity.

This interface between the BBU and AAS must carry received uplink signals to be decoded at the BBU and precoded downlink signals, which are computed at the BBU. In this paper, we analyze and mitigate the precoding distortion that occurs over this digital interface when the precoding matrix is quantized before the transmit signal is computed; that is, before the quantized precoding matrix is multiplied with the data symbols at the AAS.

We develop a benchmark method for computing the precoding matrix in an optimized manner and also propose a low-complexity alternative. The focus is on block-level precoding, where the same quantized precoding matrix is utilized for all the symbols in a coherence block. The SD offers an optimal solution for the developed problem formulation in this paper. However, it is associated with a high computational complexity, which can hinder its implementation. This paper utilizes the problem formulation from the conference version of our work \cite{khorsandmanesh2023fronthaul} and provides an alternative solution by leveraging a low complexity algorithm, i.e., an EP-based approach that did not appear in \cite{khorsandmanesh2023fronthaul}. The proposed algorithms can operate with any channel model, irrespective of how the CSI is obtained, through codebook-based feedback or uplink pilot signaling. Prior works, such as \cite{minka2013expectation,kschischang2001factor}, have demonstrated the efficacy of EP in various signal-processing tasks, highlighting its ability to achieve a favorable trade-off between complexity and performance. EP works by approximating the posterior distribution of the precoding matrix with a tractable Gaussian distribution that is iteratively updated. This approach has a significantly lower computational complexity than the SD approach, while retaining similar performance. Therefore, it is particularly attractive for large-scale systems with stringent fronthaul and processing constraints. 

\subsection{Contributions}
This paper focuses on MU-MIMO downlink transmission with limited-capacity fronthaul. We propose a transmit precoder that locally maximizes the sum rate under a power constraint, with the precoding matrix restricted to a discrete quantization codebook. The main contributions are:

\begin{itemize}
    \item We formulate the sum-rate maximization under fronthaul quantization as a \emph{quantization-aware precoding problem}, resulting in a non-convex mixed-integer optimization. We solve it using an iterative \emph{WMMSE-based} algorithm, where each step reduces to an integer least-squares problem solved via sphere decoding (SD).

    \item To overcome the prohibitive computational complexity of the SD-based method, we propose a low-complexity \emph{expectation propagation (EP)-based} algorithm that achieves near-SD performance with significantly lower complexity.
    \item We introduce a \emph{quantization-unaware} baseline and develop a low-complexity heuristic refinement algorithm that improves the sum rate with similar complexity.
    \item We analyze the  complexity and evaluate performance under Rician fading with  uniform linear arrays (ULA), and uniform planar arrays
(UPA) configurations, thereby characterizing the performance--complexity trade-offs.
\end{itemize}

\subsection{Notation and Preliminaries}
The sets of real and complex numbers are denoted by $\mathbb{R}$ and $\mathbb{C}$, respectively. Matrices and vectors are represented by bold uppercase and lowercase letters, e.g., $\boldsymbol{X}$ and $\boldsymbol{x}$. The $(m,k)$-th entry of $\boldsymbol{X}$ is $X_{m,k}$, its $k$-th column is $\boldsymbol{x}_k$, and the $m$-th element of $\boldsymbol{x}$ is $x_m$. The subvector $\boldsymbol{X}_{m,k:n}$ denotes the entries of the $m$-th row from column $k$ to $n$, and for $\boldsymbol{p} \in \mathbb{R}^N$, $\boldsymbol{p}_{k:n}$ denotes the subvector from index $k$ to $n$. The identity matrix of size $M$ is denoted by $\boldsymbol{I}_M$, while $\boldsymbol{1}_{M \times K}$ and $\boldsymbol{0}_{M \times K}$ denote all-one and all-zero matrices of size $M \times K$. The absolute value, Euclidean norm, and Frobenius norm are denoted by $|\cdot|$, $\|\cdot\|_2$, and $\|\cdot\|_{\mathrm{F}}$, respectively. The Kronecker product is written as $\otimes$. Matrix operations include transpose $(\cdot)^\mathrm{T}$, Hermitian transpose $(\cdot)^\mathrm{H}$, inverse $(\cdot)^{-1}$, vectorization $\mathrm{vec}(\cdot)$, and trace $\mathrm{tr}(\cdot)$. The real and imaginary parts are denoted by $\mathfrak{R}\{\cdot\}$ and $\mathfrak{I}\{\cdot\}$, respectively, and expectation by $\mathbb{E}[\cdot]$. A circularly symmetric complex Gaussian random variable is denoted by $X \sim \mathcal{CN}(0,\sigma^2)$, a real Gaussian by $X \sim \mathcal{N}(0,\sigma^2)$, and a uniform distribution over $[a,b]$ by $X \sim \mathcal{U}[a,b]$. Finally, $\mathrm{diag}(\boldsymbol{x})$ denotes a diagonal matrix with diagonal entries  $\boldsymbol{x}$. 

\section{System model} \label{sec:systemmodel}
We consider a single-cell MU-MIMO downlink communication system, as depicted in Fig.~\ref{fig:systemmodel}, where the BS contains an AAS with $M$ antenna-integrated radios and serves $K$ single-antenna UEs. The BBU is connected to an AAS through a limited-capacity fronthaul link, which is modeled as a finite-resolution quantizer. Each transmitted signal vector is computed as the product of a precoding matrix $\boldsymbol{P}$ and a data symbol vector $\boldsymbol{s}$. While the precoding matrix remains fixed during a transmission interval, the data symbols are changed at the symbol rate. The BBU encodes data and generates the precoding matrix based on CSI. Then, the precoding matrix and encoded data are transmitted to the AAS.  
As data symbols are bit sequences from a channel code, we can transmit them over the fronthaul without quantization errors, as they are already quantized. In contrast to earlier C-RAN studies, massive MIMO systems with large antenna arrays can make precoding signaling a dominant contributor to the fronthaul load rather than data. Hence, quantization-aware precoding design is a fundamental system component. We can then map the data symbols obtained from the BBU to modulation symbols at the AAS. However, the precoding matrix is quantized due to the constrained fronthaul. The quantized precoding matrix is then multiplied by the UEs' data symbols at the AAS, and the product is transmitted wirelessly.  We numerically observed that under the same quantization resolution, the quantized precoding signal method gives slightly lower MSE. However, the proposed separate signaling scheme requires significantly less fronthaul capacity. Therefore, in massive MIMO systems with large antenna arrays, the proposed method offers a better performance–fronthaul trade-off.

  \begin{figure}[t!]
  \centering
   \begin{overpic}[scale=0.2,unit=1mm]{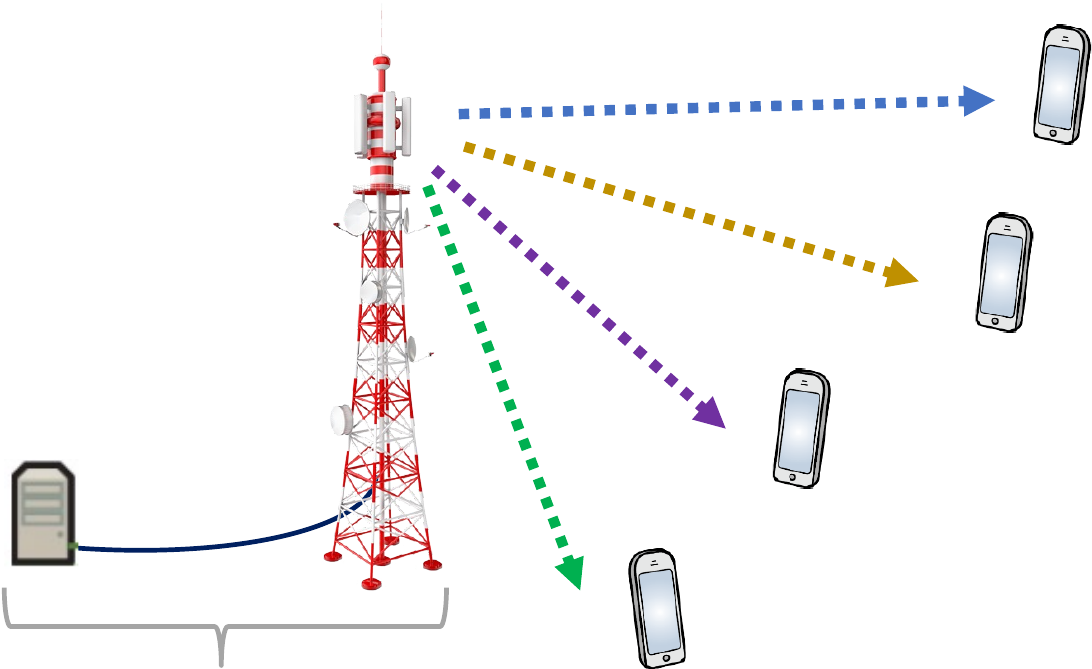}
  \put(-2,21){BBU}%
  \put(29,59){AAS}%
  \put(93,62){UEs}%
    \put(12,-5){5G BS}%
   \end{overpic}
    \vspace{2mm}
\caption{Downlink MU-MIMO system with constrained fronthaul capacity. 
}
\label{fig:systemmodel}
\vspace{-5mm}
\end{figure}

\begin{remark}
In fronthaul-constrained large mMIMO systems, the fronthaul load can be substantially reduced by the proposed separate precoding matrix and data symbols transmission. Let $N_{\mathrm{precoding}}$ denote the number of quantization bits per real dimension used for the precoding coefficients, and  $\tau_{\rm s}$ the number of data symbol vectors per block. The required fronthaul capacity is then $C_{\mathrm{separate}} = 2 M K N_{\mathrm{precoding}} + \tau_{\rm s} K \eta \quad \text{bits per block}$, where $\eta$ denotes the spectral efficiency in bits per data symbol. If the BBU instead computes and forwards the $M$-dimensional precoded signals directly, the required capacity becomes $C_{\mathrm{joint}} = M \tau_{\rm s} \eta N_{\mathrm{overSamp}}  \quad \text{bits per block}$, where $N_{\mathrm{overSamp}}$ represents the oversampling when transmitting I/Q samples. For example, consider $16$-QAM ($\eta = 4$~bit/symbol), $M=16$ antennas, $K=4$ users, $\tau_{\rm s}=100$ symbols per block, and $N_{\mathrm{precoding}} = N_{\mathrm{overSamp}} = 3$. This yields $C_{\mathrm{separate}} = 1984$ bits and $C_{\mathrm{joint}} = 19200$ bits per block, demonstrating that separate signaling of the precoding matrix and data symbols reduces the fronthaul load by approximately one order of magnitude. Moreover, in very large massive MIMO systems, the fronthaul cost associated with precoding transmission may even exceed that of data transmission, underscoring the importance of efficient precoding quantization.
\end{remark}

Before presenting the proposed scheme in Section~\ref{sec3}, we introduce the model for limited-capacity fronthaul, present the downlink communication, describe the imperfect CSI model, and formulate the sum-rate maximization problem.
\vspace{-2mm}
\subsection{Limited-Capacity Fronthaul Model}
We model the constrained fronthaul capacity through quantization. In practice, uniform quantization is often used, so we assume that our quantizer function $\mathcal{Q}(\cdot) : \mathbb{C} \to \mathcal{P}$ is a symmetric uniform quantization with step size $\Delta$. The quantizer function $\mathcal{Q}(\cdot)$  is uniquely characterized by the set of real-valued quantization labels $\mathcal{L} = \{ l_z : z=1, \ldots, L \}$, where  $L=|\mathcal{L}|$ is the number of quantization levels  $L=|\mathcal{L}|$,  $\mathbb{L} =\log_2 (L)$ denotes the number of quantization bits per real dimension, and for index $z$ we have $\mathcal{L}(z)=l_z$. Each entry of the quantization labels $\mathcal{L}$ is defined as 
\begin{equation}
    l_z  = \Delta  \left( z-1 -  \frac{L-1}{2}  \right), \quad z=1, \ldots, L.
\end{equation}
The quantizer step size $\Delta$  should be selected to minimize the distortion between the input and its quantized output. Since the optimal $\Delta$ depends on the statistical distribution of the input, which in this case is influenced by the precoding scheme and the channel model, we adopt a maximum-entropy approach. Specifically, we assume the per-antenna input to the quantizer follows a complex Gaussian distribution $\mathcal{CN}(0, \frac{q}{KM})$, where the variance is chosen to satisfy the power constraint in \eqref{eq:power constrain}. We use the corresponding optimal step size for a normal distribution, which was numerically derived in \cite{hui2001unifquantized}.

Furthermore, we let the set of quantization thresholds $\mathcal{T} = \{ \tau_0, \ldots, \tau_L \}$, where $-\infty = \tau_0 < \tau_1 < \ldots < \tau_{(L-1)} < \tau_{L} =\infty$, specify the set of the $L + 1$ quantization thresholds. For uniform quantizers, the thresholds are 
\begin{equation}
    \tau_z =  \Delta  \left( z- \frac{L}{2}  \right), \quad z=1, \ldots, L-1.
\end{equation}
The quantizer maps an input $\chi \in \mathbb{C} $ to the quantized output $\mathcal{Q}(\chi) = l_{R} + jl_{I} \in \mathcal{P}$, where $\mathfrak{R}\{ \mathcal{Q}(\chi) \}=l_{R} \in  [\tau_o,\tau_{o+1})$ and $\mathfrak{I}\{ \mathcal{Q}(\chi)=l_{I} \} \in  [\tau_l,\tau_{l+1})$, and $\mathcal{P}$ is the set of fronthaul quantization alphabet and is defined as 
\begin{equation}
   \mathcal{P} = \{ l_{R} + jl_{I} : l_{R},l_{I} \in \mathcal{L}  \}.
   \label{eq:quantizationset}
\end{equation}
It is assumed that the same quantization alphabet is applied to both the real and imaginary components. Note that $\mathcal{P}$ becomes the complex-number set $\mathbb{C}$ in the case of infinite resolution.

\subsection{Downlink Communications}
The BBU uses its available CSI to select the downlink precoding matrix $\boldsymbol{P}$. As this work mainly focuses on managing quantization errors in a precoding matrix that maximizes the sum rate, we first consider perfect CSI and disregard any potential transceiver hardware impairments. Subsequently, we illustrate how the proposed algorithms can also be utilized in scenarios with imperfect CSI. The transmitted data symbol to the UE $k$ is denoted by $s_k$, which has zero mean and normalized unit power. The corresponding channel vector  $\boldsymbol{h}^\mathrm{T}_k \in \mathbb{C}^{1 \times M}$ represents a narrowband channel 
and could be one subcarrier of a multi-carrier system, in which case the algorithms developed in this paper are applied independently on each subcarrier. The received signal at UE $k$ is
\begin{equation}
    y_k = \boldsymbol{h}^\mathrm{T}_k {\boldsymbol{p}}_k s_k + \sum_{i=1,i\ne k}^K \boldsymbol{h}^\mathrm{T}_k {\boldsymbol{p}}_i s_i + n_k,
\end{equation}
where ${\boldsymbol{p}}_k\in \mathcal{P}^{M}$ is the quantized linear precoding vector for UE $k$, $\mathcal{P}^M$ denotes the set of all length-$M$ vectors whose elements belong to the quantization alphabet $\mathcal{P}$, and $n_k \sim \mathcal{CN}(0,N_0)$ represents the independent additive complex Gaussian receiver noise with power $N_0$. The transmitted signal must adhere to the power constraint
\begin{equation}
\label{eq:sum_power_constraint}
   \mathbb{E}\left\{\|\vect{p}_k s_k\|_2^2\right\} = \sum_{k=1}^K \|\vect{p}_k\|_2^2 \leq q,
\end{equation}
where $q$ denotes the average maximum transmit power of the downlink signals. The quantized precoding matrix $\boldsymbol{P}$ and the data symbols vector $\boldsymbol{s}$ are sent separately over the fronthaul. 

For later use, we define the total received signal as $\boldsymbol{y} = [y_1,\ldots,y_K]^\mathrm{T}$, the data symbols vector as $\boldsymbol{s} = [s_1, \ldots, s_K]^\mathrm{T}\in \mathcal{O}^{K}$ ($\mathcal{O}$ is a finite set of constellation points such as a QAM alphabet), the channel matrix as  $\boldsymbol{H} = [\boldsymbol{h}_1,\ldots,\boldsymbol{h}_K]^\mathrm{T} \in \mathbb{C}^{K\times M}$, and the precoding matrix as $\boldsymbol{P} = [\boldsymbol{p}_1,\ldots,\boldsymbol{p}_K]$. It is worth mentioning that the precoded signal vector $\boldsymbol{x}$ is calculated at the AAS as $ \boldsymbol{x} = \alpha \boldsymbol{P} \boldsymbol{s}$, where the scaling factor $\alpha = \sqrt{q/ \mathrm{tr}(\boldsymbol{P}\boldsymbol{P}^\mathrm{H})}$ is computed at the AAS. This scaling factor ensures that the  $\mathbb{E}[\left \Vert \boldsymbol{x} \right \|_2^2] = q$ so that the maximum power is always utilized, despite the finite resolution.

\vspace{-3mm}

\subsection{Channel Estimation and Uplink Quantization} \label{sec:estimation}

The BBU requires CSI to determine the downlink precoding matrix. The CSI is represented by the estimated channel matrix $ \hat{\boldsymbol{H}} \in \mathbb{C}^{K \times M}$, where each element is denoted as $ \hat{h}_{k,m} $. Since our main objective is to analyze the impact of fronthaul quantization, we mainly assume perfect CSI, meaning $ \hat{\boldsymbol{H}} = \boldsymbol{H} $. However, the proposed algorithms can also operate when the BBU has imperfect CSI but evaluate the objective function under the assumption of perfect CSI. This approach is commonly adopted in precoding design with imperfect CSI \cite{jindal2006mimo}.

In this section, we describe the model of CSI imperfections used in our numerical results. We consider a time-division duplexing (TDD) system, where the channel is first estimated during the uplink phase and then applied in the downlink by utilizing channel reciprocity. To facilitate the BBU with CSI, $\tau_\mathrm{p}$  samples are allocated for uplink pilot-based channel estimation within each coherence block, allowing for $\tau_\mathrm{p}$ mutually orthogonal pilot sequences.  These pilot sequences are assigned to different UEs. The deterministic pilot sequence corresponding to UE $k$ is represented as $ \boldsymbol{\phi}_{k} \in \mathbb{C}^{\tau_p} $, satisfying the condition $ \|\boldsymbol{\phi}_{k} \|_2^2= \tau_p $.  The received pilot signal vector $\boldsymbol{Y}_k^{\text{Pilot}} \in \mathbb{C}^{M \times \tau_p }$ from the $k$-th UE at the AAS is given by
\begin{equation}
\boldsymbol{Y}^{\text{Pilot}}_k = \sqrt{q^\mathrm{U}_k} \boldsymbol{h}_k\boldsymbol{\phi}_{k}^\mathrm{T}  + {\boldsymbol{N}}^{\text{Pilot,U}}_k,
\end{equation}
where $q^\mathrm{U}_k$ is the transmit power used by UE $k$ and ${\boldsymbol{N}}^{\text{Pilot,U}}_k \in \mathbb{C}^{M \times \tau_p}$ is the noise matrix at the BS with i.i.d. $\mathcal{CN}(0,\sigma^2_{\text{U}})$-entries. To estimate the channel $\boldsymbol{h}_k$, the BBU multiplies $\boldsymbol{Y}^{\text{Pilot}}_k$ with the conjugate of the UE's pilot sequence ${{\boldsymbol{\phi}}}_k^*$ to obtain $\boldsymbol{y}^{\text{Pilot}}_{k} = \boldsymbol{Y}^{\text{Pilot}}_k \boldsymbol{\phi}^*_k$. The LS estimate 
$\hat{\boldsymbol{h}}_k$ is derived by minimizing  $\| \boldsymbol{y}^{\text{Pilot}}_{k} - \sqrt{q^\mathrm{U}_k} \tau_p\hat{\boldsymbol{h}}_k \|_2^2 $ \cite{Kay1993a}, resulting in
\begin{equation}  \label{eq:channelMLuplink} \hat{\boldsymbol{h}}_k = \frac{1}{\sqrt{q^\mathrm{U}_k} \tau_p}\boldsymbol{y}^{\text{Pilot}}_{k}.
\end{equation}

This model considers only the estimation errors caused by the limited uplink SNR while neglecting potential errors from uplink fronthaul quantization.  Now, given the possibility of channel estimation at AAS or BBU, there are two scenarios. In \cite{masoumi2019performance}, it is shown that if an AAS estimates the channel and then sends quantized CSI over a limited fronthaul link, the uplink sum rate is higher. Motivated by this, we consider the scenario where the AAS forwards the estimated uplink channel $\hat{\boldsymbol{H}}^\mathrm{T} = [\hat{\boldsymbol{h}}_1,\ldots,\hat{\boldsymbol{h}}_K] \in \mathbb{C}^{M\times K}$ to the BBU that exploits channel reciprocity under TDD operation to perform downlink precoding. The fronthaul transmission of the channel estimates introduces quantization errors. To model this effect, we adopt the additive quantization noise model (AQNM), where the quantization distortion is represented as additive white noise applied independently to each channel entry. Accordingly, the quantized CSI available at the BBU is modeled as
\begin{equation}
  \hat{\mathbf{H}}_{\rm Q}^\mathrm{T} = (1 - \eta_{\rm H})\hat{\mathbf{H}}^\mathrm{T} + \mathbf{N}_{\mathrm{Q},U},
    \label{eq:CQCSI}
\end{equation}
where $\eta_{\rm H}$ is the AQNM distortion factor, which is a monotonically decreasing function of $B_{\rm H}$ (bits per complex entry). For the Gaussian input distribution that we get under Rayleigh fading, the value of $\eta_{\rm H}$ for $B_{\rm H} \le 5$ is listed in Table~\ref{tab:eta_h} and for  $B_{\rm H} > 5$ it can be tightly approximated by $\eta_{\rm H} = \frac{\pi\sqrt{3}}{2} \cdot 2^{-2B_{\rm H}}$ \cite{fan2015uplink}. The quantization noise  $\mathbf{N}_{\mathrm{Q},U}$ in \eqref{eq:CQCSI} is uncorrelated, and has zero mean, and covariance $ \eta_{\rm H}(1-\eta_{\rm H})\mathbb{E}[|\hat{h}^\mathrm{T}_{m,k}|^2])$ per entry, for $m=1,\ldots,M$ and $k=1,\ldots,K$. While the quantization noise is, in practice, slightly correlated across antennas and UEs, prior results show that if the number of quantization bits is practically large, modeling it as independent and identically distributed (i.i.d.) yields accurate performance predictions.

\begin{table}[t!]
\centering
\caption{AQNM distortion $\eta_{\rm H}$  vs. $B_{\rm H}$.}
\label{tab:eta_h}
\begin{tabular}{|c|ccccc|}
\hline
$B_{\rm H}$      & $1$      & $2$       & $3$       & $4$        & $5$      \\
\hline
$\eta_{\rm H}$  & $0.3634$ & $0.1175$  & $0.03454 $& $0.009497$ & 0.002499  \\
\hline
\end{tabular}
\end{table}

 In Section \ref{sec:numericalimperfect}, we evaluate the effect of uplink quantization on the system's sum rate.

The impact of these quantization errors on precoding at high SNR depends on the resolution of pilot signal quantization. However, since uplink data reception generally requires higher quantization resolution than downlink transmission, it is reasonable to assume that pilot signals are also received with relatively high resolution. In the numerical results, we evaluate how uplink quantization affects the system's sum rate performance.

\subsection{Problem Formulation}
The achievable rate is $\log_2 (1 + \mathrm{SINR}_k (\boldsymbol{P}))$, where
the signal-to-interference-plus-noise-ratio (SINR) depends on the precoding matrix $\boldsymbol{P}$ as   
   \begin{equation}   
   \mathrm{SINR}_k(\boldsymbol{P}) = \frac{\big| \boldsymbol{h}^\mathrm{T}_k {\boldsymbol{p}}_k \big|^2}{ \sum_{i=1,i\ne k}^K \big| \boldsymbol{h}^\mathrm{T}_k {\boldsymbol{p}}_i \big|^2 +\bar{N_0}},
\end{equation}
where $\bar{N_0}=N_0/\alpha$.
Our objective is to maximize the sum rate of the proposed downlink communication system under a maximum transmit power constraint at the BS, defined in \eqref{eq:sum_power_constraint}. We define this problem as
\begin{maxi!}[2]
	  {\boldsymbol{P}\in \mathcal{P}^{M \times K} }{\sum_{k=1}^K  \log_2 \Big(1 + \mathrm{SINR}_k (\boldsymbol{P})\Big)}{}{\mathbb{P}_{1}: \ \ } \label{eq:firstproblem} \addConstraint{\mathrm{tr}(\boldsymbol{P}\boldsymbol{P}^\mathrm{H}) \le q,} \label{eq:power constrain}
\end{maxi!}
where the optimization variable is
 $\boldsymbol{P}$ with elements  $p_{m,k} \in \mathcal{P}$ for $k=1, \ldots ,K$ and $m=1, \ldots ,M$. Here, $\mathcal{P}^{M \times K}$ denotes the set of $M \times K$ matrices whose entries belong to $\mathcal{P}$, forming the quantization codebook.
Problem $\mathbb{P}_{1}$ is not convex, as the utility is non-concave and the search space is discrete, so it is hard to find the globally optimal solution \cite{christensen2008weighted}, as will be discussed in Section~\ref{sec3}. Therefore, in this paper, we aim to achieve a locally optimal solution by reformulating problem $\mathbb{P}_{1}$ into a more tractable form.

\section{Proposed WMMSE Algorithm and Optimized Quantized Precoding}\label{sec3}

Inspired by the classical WMMSE approach \cite{shi2011iteratively}, we will rewrite $\mathbb{P}_1$ as an equivalent iterative WMMSE problem for which a local optimum can be found through alternating optimization. This equivalence allows us to decompose the primary problem into a sequence of subproblems, which can be solved alternately through an iterative process.

Let $\hat{s}_k = \beta_{k}{y}_k$ denote the estimate at UE $k$ of the transmitted data symbol $s_k$. It is obtained from the received signal $y_k$ using the receiver gain ${\beta}_k \in \mathbb{C}$ (also known as the precoding factor \cite{jacobsson2017quantized}). For a given receiver gain, the MSE in the data detection at UE $k$ becomes
\begin{align}
    e_k(\boldsymbol{P}, {\beta}_k) &=  \mathbb{E} \left[   | {s}_k -  \hat{{s}}_k  |^2 \right]    \nonumber  
    \\ & =
    \left|\beta_{k}\right|^2\left(\left|\boldsymbol{h}^\mathrm{T}_k {\boldsymbol{p}}_k\right|^2+\sum\limits_{i=1,i\ne k}^K\left|\boldsymbol{h}^\mathrm{T}_k {\boldsymbol{p}}_i\right|^2+\bar{N_0}\right) \nonumber \\& \quad -2\Re\left(\beta_{k}\boldsymbol{h}^\mathrm{T}_k {\boldsymbol{p}}_k \right)+1 \label{eq:mseUE}.
\end{align}
The MSE expression in \eqref{eq:mseUE} is a convex function of $\beta_k$. This convexity enables us to simplify problem  $\mathbb{P}_1$ by reformulating it into a more tractable form.

The following proposition provides a reformulation of the sum rate maximization problem $\mathbb{P}_1$ as a weighted sum MSE minimization problem, adapted from \cite{Shi2011}.

\begin{proposition}
By defining the auxiliary weight $d_k \geq 0$, 
the sum rate maximization problem $\mathbb{P}_1$ is equivalent to the weighted sum MMSE problem 
\begin{mini!}[2]
  {\boldsymbol{P}\in \mathcal{P}^{M \times K} , \boldsymbol{\beta},\boldsymbol{d} }{\sum_{k=1}^K  \Big( d_k e_k(\boldsymbol{P}, {\beta}_k) - \log_2 (d_k)\Big) \label{eq:wmmse}}{\label{probelm2}}{\mathbb{P}_{2}: \ } \addConstraint{\mathrm{tr}(\boldsymbol{P}\boldsymbol{P}^\mathrm{H}) \le q,} \nonumber
\end{mini!}
where $\boldsymbol{\beta} = [\beta_1,\ldots,\beta_K]^\mathrm{T}$ is a vector containing all receiver gains and $\boldsymbol{d} = [d_1,\ldots,d_K]^\mathrm{T}$ is a vector containing all the UE weights in the weighted MSE. Problem $\mathbb{P}_2$ is equivalent to $\mathbb{P}_1$ in the sense that the optimal $\boldsymbol{P}$ is the same for both problems. Note that the cost function in \eqref{eq:wmmse} is convex in each individual optimization variable, which is the key reason for considering this equivalent problem formulation.
\end{proposition}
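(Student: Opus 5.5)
The plan is the standard WMMSE argument of \cite{Shi2011}: eliminate the auxiliary variables $\boldsymbol{\beta}$ and $\boldsymbol{d}$ in closed form for a fixed $\boldsymbol{P}$, and show that the resulting function of $\boldsymbol{P}$ equals the negative sum rate up to an additive constant. The discrete feasible set $\mathcal{P}^{M\times K}$ and the power constraint involve only $\boldsymbol{P}$. Hence
\begin{equation*}
\min_{\boldsymbol{P},\boldsymbol{\beta},\boldsymbol{d}} = \min_{\boldsymbol{P}} \min_{\boldsymbol{\beta}} \min_{\boldsymbol{d}},
\end{equation*}
and the inner minimizations are unconstrained, with no interaction with the quantization alphabet.

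\textbf{Step 1: the receiver gains.} For fixed $\boldsymbol{P}$, I minimize \eqref{eq:mseUE} over $\beta_k$. It is a convex quadratic in $\beta_k$; with $d_k \ge 0$ fixed, the term $d_k e_k$ is minimized by the same $\beta_k$. Setting the Wirtinger derivative to zero gives the MMSE receiver
\begin{equation*}
\beta_k^\star = \frac{(\boldsymbol{h}_k^\mathrm{T}\boldsymbol{p}_k)^*}{\sum_{i=1}^K |\boldsymbol{h}_k^\mathrm{T}\boldsymbol{p}_i|^2 + \bar{N_0}}.
\end{equation*}
Substituting back yields
\begin{equation*}
e_k^\star(\boldsymbol{P}) = 1 - \frac{|\boldsymbol{h}_k^\mathrm{T}\boldsymbol{p}_k|^2}{\sum_{i} |\boldsymbol{h}_k^\mathrm{T}\boldsymbol{p}_i|^2 + \bar{N_0}} = \frac{1}{1+\mathrm{SINR}_k(\boldsymbol{P})}.
\end{equation*}
The last equality is a one-line algebraic identity from the SINR definition. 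Note that $e_k^\star > 0$.

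\textbf{Step 2: the weights.} Next I minimize $d_k e_k^\star - \log_2 d_k$ over $d_k > 0$. This is strictly convex in $d_k$, and $d_k = 0$ is excluded since the objective tends to $+\infty$ there. The stationarity condition gives $d_k^\star = 1/(e_k^\star \ln 2)$. Substituting, the value is
\begin{equation*}
\frac{1}{\ln 2} + \log_2(\ln 2) + \log_2 e_k^\star = c - \log_2\bigl(1+\mathrm{SINR}_k(\boldsymbol{P})\bigr),
\end{equation*}
where $c$ is a constant independent of $\boldsymbol{P}$. The $\log_2$ in the cost produces the $\ln 2$ factors. Equivalently, the paper may intend $d_k = 1/e_k^\star$ with a natural log; either way the result is only a constant shift.

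\textbf{Step 3: conclusion.} After Steps 1 and 2, the objective of $\mathbb{P}_2$ minimized over $(\boldsymbol{\beta},\boldsymbol{d})$ equals
\begin{equation*}
Kc - \sum_k \log_2\bigl(1+\mathrm{SINR}_k(\boldsymbol{P})\bigr),
\end{equation*}
evaluated over the same feasible set of $\boldsymbol{P}$ as in $\mathbb{P}_1$. Therefore the minimizers in $\boldsymbol{P}$ of $\mathbb{P}_2$ coincide with the maximizers of $\mathbb{P}_1$. Conversely, any optimal triple of $\mathbb{P}_2$ must have $\beta_k=\beta_k^\star$ and $d_k=d_k^\star$, because both inner minimizers are unique (for $\boldsymbol{p}_k$ with $\boldsymbol{h}_k^\mathrm{T}\boldsymbol{p}_k \neq 0$; otherwise $\beta_k$ is arbitrary but irrelevant). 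The convexity remark in the statement then follows directly. The cost is quadratic, hence convex, in $\boldsymbol{P}$ for fixed $\boldsymbol{\beta},\boldsymbol{d}$ when relaxed to the continuous domain. It is quadratic in $\beta_k$, and it is linear-minus-log, hence convex, in $d_k$.

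\textbf{Main obstacle.} The only real subtlety is that $\mathbb{P}_1$ and $\mathbb{P}_2$ have discrete, non-convex feasible sets in $\boldsymbol{P}$. I must make sure the argument never uses first-order conditions in $\boldsymbol{P}$, only the exchange of the order of minimization. This is legitimate because the feasible set factorizes as a product. A minor care point is that $\bar{N_0}=N_0/\alpha$ depends on $\boldsymbol{P}$ through $\alpha$. This is harmless, because $\bar{N_0}$ appears identically in $e_k$ and in $\mathrm{SINR}_k$, so the identity in Step 1 holds pointwise in $\boldsymbol{P}$.
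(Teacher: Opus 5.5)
Your proposal is correct and follows essentially the same route as the paper. Both eliminate $\beta_k$ via the MMSE receiver, then eliminate $d_k$ via the stationarity condition $d_k = 1/(\ln(2)\, e_k)$, then use $1/e_k = 1+\mathrm{SINR}_k(\boldsymbol{P})$; yours is more careful about the additive constant (the paper's ``$K$'' should be $K/\ln 2 + K\log_2\ln 2$) and about exchanging the order of minimization over the discrete feasible set. One small slip: since $\bar{N_0}>0$, $e_k$ is strictly convex in $\beta_k$, so the minimizer is unique (it equals $0$) even when $\boldsymbol{h}_k^\mathrm{T}\boldsymbol{p}_k=0$, rather than arbitrary.
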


\begin{IEEEproof}
Given the precoding vectors ${\boldsymbol{p}}_k $, we  can select the value of $\beta_k$ that minimizes the MSE for given $\boldsymbol{P}$ as
\begin{equation}
    \bar{\beta}_k (\boldsymbol{P})= \frac{ (\boldsymbol{h}^\mathrm{T}_k {\boldsymbol{p}}_k )^{*}}{\left|\boldsymbol{h}^\mathrm{T}_k {\boldsymbol{p}}_k\right|^2+\sum_{i=1,i\ne k}^K\left|\boldsymbol{h}^\mathrm{T}_k {\boldsymbol{p}}_i\right|^2 + \bar{N_0}}.
    \label{eq:beta}
\end{equation}

Moreover, the objective function  \eqref{eq:wmmse} is convex with respect to $d_k$, and its optimal value can be determined by computing its derivative as follows:
 \begin{align}
   \bar{d}_k &= \frac{1}{\ln (2) e_k(\boldsymbol{P}, \bar{\beta}_k (\boldsymbol{P}))} \nonumber \\ & = \frac{1}{\ln (2) }\left( 1 + \frac{\left|\boldsymbol{h}^\mathrm{T}_k {\boldsymbol{p}}_k\right|^2}{\sum_{i=1,i\neq k}^K \left|\boldsymbol{h}^\mathrm{T}_k {\boldsymbol{p}}_i \right|^2 +\bar{N_0}} \right) . \label{eq:weightsmse}
\end{align}
After substituting the optimal weights  \eqref{eq:weightsmse} into the objective function \eqref{eq:wmmse},  the optimization problem for the precoding $\boldsymbol{P}$ can be expressed as 
    \begin{mini}|l|
	  {\boldsymbol{P}\in \mathcal{P}^{M \times K} }{- \sum_{k=1}^K  \log_2 \left(\frac{1}{e_k(\boldsymbol{P}, \bar{\beta}_k (\boldsymbol{P}))} \right) }{}{}\label{eq:invererror}
     \end{mini}
subject to \eqref{eq:power constrain}.  If we substitute the optimal receiver gain \eqref{eq:beta} into the MSE expression \eqref{eq:mseUE}, we will have 
\begin{align}
  & \frac{1}{e_k(\boldsymbol{P}, \bar{\beta}_k (\boldsymbol{P}))} = \frac{\sum_{i=1}^K \left|\boldsymbol{h}^\mathrm{T}_k {\boldsymbol{p}}_i \right|^2 +\bar{N_0}}{\sum_{i=1,i\neq k}^K \left|\boldsymbol{h}^\mathrm{T}_k {\boldsymbol{p}}_i \right|^2 +\bar{N_0}} \nonumber \\
  & = 1 + \frac{\left|\boldsymbol{h}^\mathrm{T}_k {\boldsymbol{p}}_k\right|^2}{\sum_{i=1,i\neq k}^K \left|\boldsymbol{h}^\mathrm{T}_k {\boldsymbol{p}}_i \right|^2 +\bar{N_0}} = 1+ \mathrm{SINR}_k(\boldsymbol{P}).
\end{align} 
So \eqref{eq:wmmse}  becomes $K-\sum_{k=1}^{K} \log_2(1+\mathrm{SINR}_k(\boldsymbol{P}))$. Therefore, problem $\mathbb{P}_{2}$ is equivalent to problem $\mathbb{P}_{1}$.
\end{IEEEproof}

For fixed ${\beta}_{k}$  and ${d}_k$ (e.g., calculated as in \eqref{eq:beta} and  \eqref{eq:weightsmse}), we have the WMMSE problem
\begin{mini!}[2]
	  {\boldsymbol{P}\in \mathcal{P}^{M \times K}}{\sum_{k=1}^K   {d}_k e_k(\boldsymbol{P}, {\beta}_k) \label{eq:sum-SE-maximization-weighted-MMSE-subproblem}}{\label{probelm3}}{\mathbb{P}_{3}: \ } \addConstraint{\mathrm{tr}(\boldsymbol{P}\boldsymbol{P}^\mathrm{H}) \le q,} \nonumber
\end{mini!}
which is an integer convex problem. 
By iterating between updating 
$\beta_k$ using \eqref{eq:beta}, $d_k$ using \eqref{eq:weightsmse}, and $\boldsymbol{P}$ by solving $\mathbb{P}_3$, we obtain a block coordinate descent algorithm that will converge to a stationary point (for the same reasons as in \cite{shi2011iteratively}). Algorithm~\ref{Alg:WMMSE} summarizes the presented summary of the proposed iterative weighted sum MSE minimization approach.

We can initialize the algorithm using any precoding matrix, including those obtained using classical infinite-resolution precoding schemes.
The Wiener filtering (WF) precoding scheme (also known as regularized zero-forcing) is the most desirable one in this context since it can be derived by minimizing the sum MSE \cite{joham2005linear}. Hence, we suggest setting the initial  precoding matrix as $\overline{\vect{P}}^{(0)} = \boldsymbol{H}^\mathrm{H} (\boldsymbol{H}\boldsymbol{H}^\mathrm{H}+\frac{KN_0}{q}\boldsymbol{I}_K )^{-1}$.

\begin{algorithm}[t!]
\caption{Iterative WMMSE algorithm for solving problem $\mathbb{P}_2$.}
\label{Alg:WMMSE}
\begin{algorithmic}[1]
\STATEx {\textbf{Inputs}: $\vect{H}$, $K$, $q$, $N_0$,  Maximum number of iterations $N_{\text{max}}$, convergence threshold $\epsilon>0$ }
\STATE{Initialize the precoding matrix $\overline{\vect{P}}^{(0)} = [\overline{\vect{p}}_1^{(0)},\ldots,\overline{\vect{p}}_K^{(0)}] = \boldsymbol{H}^\mathrm{H} (\boldsymbol{H}\boldsymbol{H}^\mathrm{H}+\frac{KN_0}{q}\boldsymbol{I}_K )^{-1}$}
\STATE{Compute $\overline{\beta}_k^{(0)} = \frac{\left( \vect{h}_k^{\mathrm{T}} \overline{\vect{p}}^{(0)}_k\right)^*}{|\vect{h}_k^{\mathrm{T}} \overline{\vect{p}}^{(0)}_k|^2 + \sum_{i=1,i\neq k}^K |\vect{h}_k^{\mathrm{T}} \overline{\vect{p}}^{(0)}_i|^2 + \bar{N_0}},$ for $k = 1,\ldots,K$}
\STATE{Compute  $\overline{\vect{d}}^{(0)} = [\overline{d}_1^{(0)},\ldots,\overline{d}_K^{(0)}]^{\mathrm{T}}$, } where $\overline{d}_k^{(0)} = \frac{1}{\ln (2) }\Big( 1 + \frac{\left|\boldsymbol{h}^\mathrm{T}_k \overline{\vect{p}}^{(0)}_k\right|^2}{\sum_{i=1,i\neq k}^K \left|\boldsymbol{h}^\mathrm{T}_k \overline{\vect{p}}^{(0)}_i \right|^2 +\bar{N_0}} \Big)$ 
\STATE{Set $\varepsilon \gets \epsilon + 1$, $n \gets 0$}
\STATE{Find $\vect{e}^{(0)}  = [e_1^{(0)},\ldots,e_K^{(0)}]^{\mathrm{T}}$, where $e_k^{(0)} =   \left|\overline{\beta}_k^{(0)}\right|^2\Big(\left|\boldsymbol{h}^\mathrm{T}_k \overline{\vect{p}}^{(0)}_k\right|^2+\sum\limits_{i=1,i\ne k}^K\left|\boldsymbol{h}^\mathrm{T}_k \overline{\vect{p}}^{(0)}_i\right|^2+\bar{N_0}\Big)  -2\Re\left(\overline{\beta}_k^{(0)}\boldsymbol{h}^\mathrm{T}_k \overline{\vect{p}}^{(0)}_k \right)+1 $}

\STATE{
Compute $f^{(0)} = \sum_{k=1}^K \overline{d}_k^{(0)} e_k^{(0)} - \log_2(\overline{d}_k^{(0)})$}
\WHILE{$\varepsilon > \epsilon$ and $n < N_{\mathrm{max}}$}
\STATE{$n \gets n + 1$}
\LongState{Solve problem $\mathbb{P}_3$}
\FOR{$k = 1$ to $K$}
\STATE{Obtain  $\{\overline{\beta}_k^{(n)}\}$ from \eqref{eq:beta}}
\STATE{Obtain $\{\overline{d}_k^{(n)}\}$ from \eqref{eq:weightsmse}}
\LongState{Compute $\{e_k^{(n)}\}$ using \eqref{eq:mseUE}}
\ENDFOR
\STATE{Compute $f^{(n)} = \sum_{k=1}^K \overline{d}_k^{(n)} e_k^{(n)} - \log_2(\overline{d}_k^{(n)})$}
\STATE{Compute convergence gap $\varepsilon = |f^{(n)} - f^{(n-1)}|$}
\ENDWHILE
\STATEx{ \textbf{Outputs:} $\overline{\vect{p}}_k^{(n)}~k = 1,\ldots, K,$} 
 \end{algorithmic}
\end{algorithm}

\begin{remark} \label{rem:weighted} By utilizing Algorithm~\ref{Alg:WMMSE}, we can efficiently solve the sum rate maximization problem $\mathbb{P}_1$ locally. By following the same iterative steps, we can aalso xtend this approach to solve the weighted sum rate maximization problem:
\begin{maxi!}[2]
	  {\boldsymbol{P}\in \mathcal{P}^{M \times K} }{\sum_{k=1}^K  u_k \log_2 \Big(1 + \mathrm{SINR}_k (\boldsymbol{P})\Big)}{}{ \ \ } \label{eq:firstproblem2} \addConstraint{\mathrm{tr}(\boldsymbol{P}\boldsymbol{P}^\mathrm{H}) \le q,} \nonumber \addConstraint{u_k \ge 0, \quad k=1,\ldots,K,}
\end{maxi!}
where $u_k$ is the weight of UE~$k$.
This formulation is particularly relevant in systems where certain UEs have higher priority, such as QoS-constrained networks, fairness-aware scheduling, and differentiated service applications. The primary modification in this case lies in  \eqref{eq:weightsmse}, where the parameter $\bar{d}_k$ is updated to incorporate the UE-specific weights as $\bar{d}_k^{\text{new}} = u_k\bar{d}_k$, ensuring that UEs with higher weights receive greater emphasis in the optimization process.
\end{remark}

The main complexity in the proposed algorithm originates from solving $\mathbb{P}_3$. 
We rewrite the objective function \eqref{eq:sum-SE-maximization-weighted-MMSE-subproblem} as 
\begin{align}
 & \sum_{k=1}^K {d}_{k}e_{k}\left(\boldsymbol{P},{\beta}_k\right) = \mathbb{E} \left[  \left \Vert \sqrt{\mathrm{diag}(
 {\boldsymbol{d}}
 )} \Big (    \boldsymbol{s} -   \mathrm{diag}(
 {\boldsymbol{\beta}}) \boldsymbol{y} \Big) \right \|_2^2 \right],
 \label{eq:mmseform}
\end{align} 
and expand the expression in \eqref{eq:mmseform} as 
\begin{align}
&\mathbb{E} \left[  \left \Vert \sqrt{\mathrm{diag}(\boldsymbol{d})}    \boldsymbol{s} -  \sqrt{\mathrm{diag}(\boldsymbol{d})} \mathrm{diag}(\boldsymbol{\beta}) \boldsymbol{y} \right \|_2^2 \right] \nonumber  
\\ &= \mathrm{tr} \Big(\mathrm{diag}(\boldsymbol{d})  - \sqrt{\mathrm{diag}(\boldsymbol{d})}\boldsymbol{D} \boldsymbol{H}\boldsymbol{P}-  \sqrt{\mathrm{diag}(\boldsymbol{d})} \boldsymbol{P}^\mathrm{H}
 \boldsymbol{H}^\mathrm{H}\boldsymbol{D}^\mathrm{H} \nonumber \\& \quad + \boldsymbol{D}\boldsymbol{H}\boldsymbol{P}\boldsymbol{P}^\mathrm{H}\boldsymbol{H}^\mathrm{H}\boldsymbol{D}^\mathrm{H}   +  \bar{N_0} \boldsymbol{D} \boldsymbol{D}^\mathrm{H}  \Big), \label{eq:simplified} 
\end{align} 
where we introduce the notation $\boldsymbol{D}= \sqrt{\mathrm{diag}(\boldsymbol{d})} \mathrm{diag}(\boldsymbol{\beta})$.

We first notice that $\mathbb{P}_3$ is a so-called integer least-squares problem due to the \emph{uniform quantizer} that we chose. It can be solved using general-purpose methods, such as CVX \cite{grant2014cvx}, but instead of using a general-purpose solver that will have high complexity, we propose an efficient dedicated algorithm. First, we rewrite the objective
function \eqref{eq:simplified} using a Lagrange multiplier $\omega$ as 
\begin{align}
    \mathfrak{L}(\boldsymbol{P}, \boldsymbol \beta, \omega) &= \mathrm{tr} \Big(\mathrm{diag}(\boldsymbol{d})  - \sqrt{\mathrm{diag}(\boldsymbol{d})}\boldsymbol{D} \boldsymbol{H}\boldsymbol{P}  \nonumber \\ & -  \sqrt{\mathrm{diag}(\boldsymbol{d})} \boldsymbol{P}^\mathrm{H}
 \boldsymbol{H}^\mathrm{H}\boldsymbol{D}^\mathrm{H}   + \boldsymbol{D}\boldsymbol{H}\boldsymbol{P}\boldsymbol{P}^\mathrm{H}\boldsymbol{H}^\mathrm{H}\boldsymbol{D}^\mathrm{H}  \nonumber \\& +  \bar{N_0} \boldsymbol{D} \boldsymbol{D}^\mathrm{H}  \Big) + \omega \big( \mathrm{tr}(\boldsymbol{P}\boldsymbol{P}^\mathrm{H})-q \big). \label{eq:lagrangem}
\end{align}
When minimizing \eqref{eq:lagrangem} with respect to $\boldsymbol{P}$, we can drop the constant term $\mathrm{tr} (\mathrm{diag}(\boldsymbol{d}) )$ and have problem $\mathbb{P}_4$, given in \eqref{eq:lastsd} at the top of the next page. Although strong duality does not hold for the integer least-squares problem $\mathbb{P}_3$, $\mathbb{P}_4$ provides a good approximation. 
\begin{figure*}[ht!]
\begin{equation}\label{eq:lastsd}
    \mathbb{P}_{4}:
\underset{\omega \ge 0}{\text{maximize}}
\;
\underset{\boldsymbol{P}\in \mathcal{P}^{M \times K}}{\text{min}}
\;
\mathrm{tr}\Big(
\boldsymbol{P}^\mathrm{H}
\big(\boldsymbol{H}^\mathrm{H}\boldsymbol{D}^\mathrm{H}\boldsymbol{D}\boldsymbol{H}
+ \omega \boldsymbol{I}_M \big)
\boldsymbol{P}
- \sqrt{\mathrm{diag}(\boldsymbol{d})}\boldsymbol{D}\boldsymbol{H}\boldsymbol{P}
- \big(\sqrt{\mathrm{diag}(\boldsymbol{d})}\boldsymbol{D}\boldsymbol{H}\boldsymbol{P}\big)^\mathrm{H}
\Big)
- \omega q .
\end{equation}

\begin{mini}|l|
	  {\boldsymbol{p}_i\in {\mathcal{P}}^{M }, i=1,\ldots,K }{\sum_{i=1}^K \Big( \boldsymbol{p}_i^\mathrm{H} \left ( \boldsymbol{H}^\mathrm{H}\boldsymbol{D}^\mathrm{H} \boldsymbol{D}\boldsymbol{H}+\omega \boldsymbol{I}_M \right )\boldsymbol{p}_i -\boldsymbol{f}_i^\mathrm{T}\boldsymbol{p}_i - \left ( \boldsymbol{f}_i^\mathrm{T}\boldsymbol{p}_i\right )^\mathrm{H}\Big),}{ \label{eq:modifysd}}{\mathbb{P}_{5}: \ }
\end{mini}
\hrule
\end{figure*}
For a fixed value of $\omega$, and by vectorizing  $\mathbb{P}_4$ and using  $\boldsymbol{f} = \mathrm{vec}( (\sqrt{\mathrm{diag}(\boldsymbol{d})}\boldsymbol{DH})^\mathrm{T})$, we  obtain $\mathbb{P}_5$ in \eqref{eq:modifysd},
which finds a suboptimal precoding matrix and has $K$ separable objective functions that each only depend on one of the optimization variables.
This feature enables \emph{parallel} optimization of $\boldsymbol{p}_i$ for $i=1,\ldots,K$. Thus, in addition to the more efficient search strategy, the reformulation of problem $\mathbb{P}_5$ also significantly reduces the dimension of each subproblem \cite{khorsandmanesh2023optimized}. By defining $\hat{\boldsymbol{V}} = \boldsymbol{H}^\mathrm{H}\boldsymbol{D}^\mathrm{H} \boldsymbol{D}\boldsymbol{H}+ \omega \boldsymbol{I}_M$, we can obtain the equivalent formulation of each term of the objective function $\mathbb{P}_5$ as 
\begin{align}   \boldsymbol{p}_i^\mathrm{H} \hat{\boldsymbol{V}} \boldsymbol{p}_i -\boldsymbol{f}_i^\mathrm{T}\boldsymbol{p}_i - \left ( \boldsymbol{f}_i^\mathrm{T}\boldsymbol{p}_i\right )^\mathrm{H}  = \lVert \boldsymbol{c}_i - \boldsymbol{G}\boldsymbol{p}_i \rVert_2^2 - \boldsymbol{c}_i^\mathrm{H}\boldsymbol{c}_i, \label{eq:sdLS}
\end{align}
where $\boldsymbol{G} \in \mathbb{C}^{M \times M }$ is an upper triangular matrix that is obtained from the Cholesky decomposition $\hat{\boldsymbol{V}} = \boldsymbol{G}^\mathrm{H} \boldsymbol{G}$ and $\boldsymbol{c}_i = (\boldsymbol{f}_i^\mathrm{T}\boldsymbol{G}^{-1})^\mathrm{H}$. Using this notation, the sub-problem for optimizing $\mathbb{P}_5$ can be rewritten as
\begin{mini}|l|	  {\boldsymbol{p}_i\in {\mathcal{P}}^{M } }{\lVert \boldsymbol{c}_i - \boldsymbol{G}\boldsymbol{p}_i \rVert_2^2,}{ \label{eq:modifysd2}}{\mathbb{P}_{6}: \ }
\end{mini}
which is an integer least-squares problem due to the finite-resolution precoder. In problem $\mathbb{P}_{6}$, the precoding vector entries $\boldsymbol{p}_i$ are complex-valued, resulting in a search space that encompasses complex-valued alphabets. Noting the independence of the real and imaginary components of the set $\mathcal{P}$, problem $\mathbb{P}_{6}$ can be reformulated in an equivalent real-valued form. This transformation simplifies the computational process, as the two-dimensional search over complex precoding entries is reduced to a one-dimensional search. In other words, we utilize the
definitions
\begin{equation}
\begin{aligned} &\boldsymbol{c}_{i,\mathbb{R}} = \begin{bmatrix}
    \mathfrak{R}(\boldsymbol{c}_{i}) \\
    \mathfrak{I}(\boldsymbol{c}_{i})
    \end{bmatrix}, \quad
    \boldsymbol{p}_{i,\mathbb{R}} = \begin{bmatrix}
    \mathfrak{R}(\boldsymbol{p}_{i}) \\
    \mathfrak{I}(\boldsymbol{p}_{i})
    \end{bmatrix}, \\
    &\boldsymbol{G}_{\mathbb{R}} =
    \begin{bmatrix}
    \mathfrak{R}(\boldsymbol{G}) & -\mathfrak{I}(\boldsymbol{G})  \\
    \mathfrak{I}(\boldsymbol{G}) & \mathfrak{R}(\boldsymbol{G})
    \end{bmatrix}.
\end{aligned}
\end{equation}
to write a real-valued reformulation of problem $\mathbb{P}_{6}$. Since the $i$-th optimization in problem $\mathbb{P}_{6}$ can be performed independently of the other elements, we will omit the subscript $i$ for simplicity of notation. Moreover, for the rest of the paper, we utilize a real-domain formulation, and also omit the subscript $\mathbb{R}$. Then we want to solve the following rewritten format of the problem $\mathbb{P}_{6}$:
\begin{mini}|l|	  {\boldsymbol{p}\in {\mathcal{L}}^{2M } }{\lVert \boldsymbol{c} - \boldsymbol{G}\boldsymbol{p} \rVert_2^2,}{ \label{eq:sdreduced notation}}{\mathbb{P}_{7}: \ }
\end{mini}
and run it for each $i$ in the real domain. 
Under the adopted real-domain formulation, $\mathcal{L}^{2M}$ denotes the Cartesian product of the real-valued quantization alphabet $\mathcal{L} $ taken $2M$ times, since each complex precoding coefficient in $\mathcal{P}$ is formed from one real and one imaginary component drawn from $\mathcal{L}$.

The optimal value of $\omega$, which satisfies the power constraint \eqref{eq:power constrain} near equality, can be found via heuristic bisection search. Note that since this is a non-convex problem, we can not guarantee monotonicity or convergence to the global optimum. However, we added a checkpoint to heuristically guarantee convergence to the best point that is tested within the algorithm. The bisection algorithm will fix the updated $\omega$ only if the new precoding matrix has a higher sum rate. We presented this heuristic bisection search in Algorithm~\ref{alg:bisection}.

\begin{algorithm}[t!]
\caption{Heuristic Bisection Algorithm}
\label{alg:bisection}
\begin{algorithmic}[1]
\STATE Initialize $\omega_{\min}=1$, $\omega_{\max}=1$, $q$, and an initial precoding $\boldsymbol{P}$.
\STATE Set $f(\omega) \leftarrow \mathrm{tr}(\boldsymbol{P}(\omega)\boldsymbol{P}(\omega)^\mathrm{H}) - q$.
\REPEAT
\STATE Set $\omega \leftarrow (\omega_{\min}+\omega_{\max})/2$.
\STATE Calculate $f(\omega)$.
\STATE \textbf{if} $f(\omega_{\rm max})\cdot f(\omega) < 0$ \textbf{then}
\STATE  $\omega_{\rm min}\leftarrow \omega$
\STATE \textbf{if} $f(\omega_{\rm min})\cdot f(\omega) < 0$ \textbf{then}
\STATE  $\omega_{\rm max}\leftarrow \omega$
 \STATE \textbf{end if}
\STATE Solve the inner optimization problem $\mathbb{P}_{3}$ to obtain $\boldsymbol{P}(\omega)$.
    \STATE \textbf{if}  sum rate $\sum_{k=1}^K  \log_2 \Big(1 + \mathrm{SINR}_k (\boldsymbol{P}(\omega)\Big)$ is improved \textbf{then}
    \STATE \quad Update $\boldsymbol{P} \leftarrow \boldsymbol{P}(\omega)$.
   \STATE  \textbf{else} break
   \STATE \textbf{end if}
\STATE $\boldsymbol{P}(\omega)$ will only be updated if $\sum_{k=1}^K  \log_2 \Big(1 + \mathrm{SINR}_k (\boldsymbol{P}(\omega)\Big)$ is higher than previous round otherwise \textbf{break}
 \UNTIL{ $\left|\mathrm{tr}(\boldsymbol{P}\boldsymbol{P}^\mathrm{H}) - q  \right|$ is less than a threshold or the bisection interval becomes very small or the iterations becomes 100.}   

\end{algorithmic} 

\end{algorithm}

\subsection{Sphere Decoding-Based Precoding}

The search space of the problem $\mathbb{P}_{7}$ is a scaled finite subset of the infinite integer lattice. A technique that has previously been proposed as an efficient algorithm to solve closest lattice point problems in the Euclidean sense is called SD. We suggest using the Schnorr-Euchner SD method, a depth-first tree search algorithm, to solve problem $\mathbb{P}_{7}$ for a fixed value of $\omega$. We then make a bisection search over $\omega$ to find a value that gives a solution that satisfies the power constraint in \eqref{eq:power constrain} near equality.  We denote the optimal precoding solution by $\overline{\vect{p}}$. Note that parallel algorithms can be run for all $K$ precoding vectors. For the sake of completeness, we provide a brief overview of the proposed SD-based algorithm. More detailed explanations can be found in \cite{agrell2002closest}.

The tree search begins with an initial search radius set to infinity. Starting from the highest level $m = 2M$, the algorithm computes $  \left|{{c}}_{2M} - {{G}}_{2M,2M}{p}_{2M}\right|$ for all ${p}_{2M} \in \mathcal{L}$. The computed values are sorted, and the sorting indices are stored in $\vect{w}_{2M}$, such that $\vect{w}_{2M}[1]$ is the index $x$ corresponding to $l_{x} \in \mathcal{L}$, which is the closest to  $\check{{p}}_{2M}$, while $\vect{w}_{2M}[L]$ indicates the index of the farthest element. Then, the quantization label that minimizes this absolute value is selected as $\check{{p}}_{2M}= \mathcal{L}(\vect{w}_{2M}[1])$. The algorithm then proceeds to the next lower level and repeats the process, determining $\vect{w}_{2M-1}$ and $\check{{p}}_{2M-1}$ in the same way, given the value of $\check{{p}}_{2M}$. This process continues for all other levels, going from $2M$ to 1. At each level $m$, the selected value $\check{{p}}_{m}$ is given by 
\begin{mini}|l|	  {{p}_{m} \in \mathcal{L}}{\left| {\xi}_m -  {{G}}_{m,m}{p}_{m}\right|,}{  \label{eq:precoding_minimizer}}{\check{{p}}_{m}= \ }
\end{mini}
where
    ${\xi}_m = 
    {{c}}_m - {\vect{G}}_{m,m+1:2M}\check{\vect{p}}_{m+1:2M}$.
Once \eqref{eq:precoding_minimizer} is solved for all $2M$ levels, the resulting combination of precoding labels, $\check{\vect{p}}$, is stored as the temporary optimal solution ${\vect{p}}$. The radius given by
\begin{equation}
   {r}_1 =  \sum_{m=1}^{2M} \big| {{c}}_m - {\vect{G}}_{m,m:2M}\check{\vect{p}}_{m:2M}\big|^2,
\end{equation}
is chosen as the new search radius. Subsequently, the algorithm updates $\check{{p}}_2$ by selecting the next label,  $\mathcal{L}(\vect{w}_2[2])$, while keeping $\check{{p}}_3, \ldots, \check{{p}}_{2M}$ fixed. The updated value of $\check{{p}}_2$  is checked to ensure that the new radius
\begin{small}${r}_2 = \sum_{i=2}^{2M} \left| {{c}}_i - {\vect{G}}_{i,i:2M}{\vect{p}}_{i:2M}  \right|^2$\end{small} is within the updated search radius $r_{\mathrm{opt}}$.  $r_{\mathrm{opt}}$ denotes the current best squared distance metric and serves as the adaptive pruning radius, initialized as $\infty$. If ${r}_2 < r_{\mathrm{opt}}$, the algorithm proceeds to update $\check{{p}}_1$ by solving \eqref{eq:precoding_minimizer} again. Otherwise, the algorithm moves to an upper level and updates
$\check{{p}}_3$, fixing $\check{{p}}_4,\ldots,\check{{p}}_{2M}$. This procedure is repeated iteratively. After each successful update of $\check{{p}}_m$, that is when the update leads to ${r}_m < r_{\mathrm{opt}}$, the algorithm proceeds to the next lower level by computing $\check{{p}}_{m-1}$ using~\eqref{eq:precoding_minimizer}. If the update of $\check{{p}}_m$ is not successful, the index $m$ is increased by 1. The algorithm attempts to update $\check{{p}}_{m+1}$ by selecting an alternative label from $\mathcal{L}$, 
guided by the sorting indices at that level. This process continues as new combinations of entries are explored.
If a new combination $\check{{p}}_1,\ldots,\check{{p}}_{2M}$ is found that satisfies  ${r}_1 = \|{\vect{c}} - {\vect{G}} \check{\vect{p}}\|_2^2 < r_{\mathrm{opt}}$, the search radius is updated to $r_{\mathrm{opt}} = {r}_1$. 
At each level $m$, the next quantization label is selected using the sorting indices, ensuring efficient exploration of the solution space. If the updated radius at level $m$ exceeds $r_{\mathrm{opt}}$, the algorithm moves to the next level without further updates. This iterative process ensures that the optimal combination of precoding labels is progressively refined, leveraging the sorting indices to minimize computational complexity while adhering to the search radius constraint.
It is essential to emphasize the role of sorting indices in the update process of precoding entries. Specifically, assume that $\check{{p}}_m = \mathcal{L}(\vect{w}_m[i])$ at a given iteration. When the algorithm proceeds from level $m-1$ to level $m$, the precoding entry is updated to $\check{{p}}_m = \mathcal{L}(\vect{w}_m[i+1])$. If this new candidate fails to satisfy the optimal radius constraint, i.e., if ${r}_m > r_{\mathrm{opt}}$ for $\check{{p}}_m = \mathcal{L}(\vect{w}_m[i+1])$, then no further updates to $\check{{p}}_m$ are necessary for the current values of $\check{p}_{m+1}, \ldots, \check{p}_{2M}$. In other words, if ${r}_m > r_{\mathrm{opt}}$ for the $(i+1)$-th candidate, the same inequality will hold for all subsequent candidates $\check{p}_m = \mathcal{L}(\vect{w}_m[i^\prime])$ with $i^\prime > i+1$. The pseudo-code of this SD algorithm is provided in \cite[Algorithm~1]{khorsandmanesh2023optimized}.

\subsection{Heuristic Expectation Propagation-Based Precoding}

The proposed SD-based precoding approach is specifically designed for problem $\mathbb{P}_{7}$, yet its computational complexity increases significantly as the number of antennas $N$ and UEs $K$ grows. In this section, we discuss a lower complexity approach to solve this problem. Specifically, we leverage a message-passing-based algorithm called the expectation propagation (EP) algorithm, which can achieve near-maximum likelihood performance for signal detection problems with a complexity order similar to zero-forcing.  EP addresses the discrete constraints on the precoding vector $\vect{p}$ by relaxing its discrete prior distribution, induced by the constellation points, by a continuous Gaussian prior distribution. The likelihood function is also approximated by a Gaussian distribution, enabling tractable Gaussian posterior computations. This Gaussian posterior is then projected onto the discrete constellation space, and the Gaussian prior is refined based on this projection. Using the refined prior, the Gaussian posterior is recalculated in the next iteration. This process is repeated iteratively until convergence. It is important to note that some non-trivial modifications are required to apply the EP algorithm, used in MIMO symbol detection, to the formulation of the problem in \eqref{eq:sdreduced notation}. This is particularly due to the unavailability of the stochastic error variance/mismatch between the  variable $\vect{c}$ and $\vect{Gp}$.\footnote{The developed EP algorithm for solving the least-square problem $\lVert \boldsymbol{c} - \boldsymbol{G}\boldsymbol{p} \rVert_2^2$ requires knowledge of the statistical error variance, introduced by the additive white Gaussian noise at the receiver side.}

\subsubsection{Maximum A Posteriori Problem}

We rewrite the minimization problem in \eqref{eq:sdreduced notation} as a maximum a posteriori problem:
\begin{equation}\label{eq_MAP}
    \arg\max_{\vect{p} \in \mathcal{L}^{2M} }  p(\vect{p} | \vect{c}),
\end{equation}
where \begin{flalign} \label{eq_EP:Posterior_ori}
    & p(\vect{p}|\vect{c}) = \frac{p(\vect{c}|\vect{p}) }{ p(\vect{c})} \cdot p(\vect{p}) \propto \mathcal{N} \left( \vect{c}: \vect{G} \vect{p} , {\sigma}^2 \vect{I} \right)   {p(\vect{p})}, 
\end{flalign}
 $p(\vect{p}) $ is the a priori distribution of $\vect{p}$, and $p(\vect{c})$ is omitted as it does not depend on the distribution of $\vect{p}$.

Notice that to derive \eqref{eq_EP:Posterior_ori}, we impose that $\vect{\eta} = \vect{c} - \vect{G} \vect{p}$, where $\vect{\eta} \sim \mathcal{N}(\vect{0}, \sigma^2 \vect{I})$. 
Consequently, the likelihood distribution $p(\vect{c} | \vect{p})$ is Gaussian, given by $\mathcal{N}(\vect{c}; \vect{G} \vect{p}, \sigma^2 \vect{I})$. This assumption enables the utilization of low-complexity algorithms such as EP. In Fig.~\ref{fig:histogram}, we plot the histogram of $\vect{c}-\vect{G p }$ for the specified system configuration. The randomness in $\boldsymbol{c}-\boldsymbol{G}\boldsymbol{p} $ originates solely from the channel realization, since $\boldsymbol{p}$ is a deterministic function of the channel obtained via Lagrangian optimization, and the reported histogram is generated for a fixed Lagrange multiplier $\omega=1$.
Each data point in the histogram corresponds to a channel realization. The $\vect{p }$ is selected as columns of the initial  WF precoding matrix $\overline{\vect{P}}^{(0)} = \boldsymbol{H}^\mathrm{H} (\boldsymbol{H}\boldsymbol{H}^\mathrm{H}+\frac{KN_0}{q}\boldsymbol{I}_K )^{-1}$. We can see that $\vect{c}-\vect{G p }$ approximately follows the Gaussian distribution. The fitted normal probability density function (PDF) is plotted using the ksdensity function in MATLAB, which estimates the PDF of a random variable based on a finite set of samples using kernel smoothing techniques. From this observation, we conclude that applying the EP to the problem in \eqref{eq_MAP} is feasible if we can estimate the error variance $\sigma^2$.

\subsubsection{The EP Algorithm}
\begin{figure}[!t]
        \centering    \includegraphics[scale=0.25]{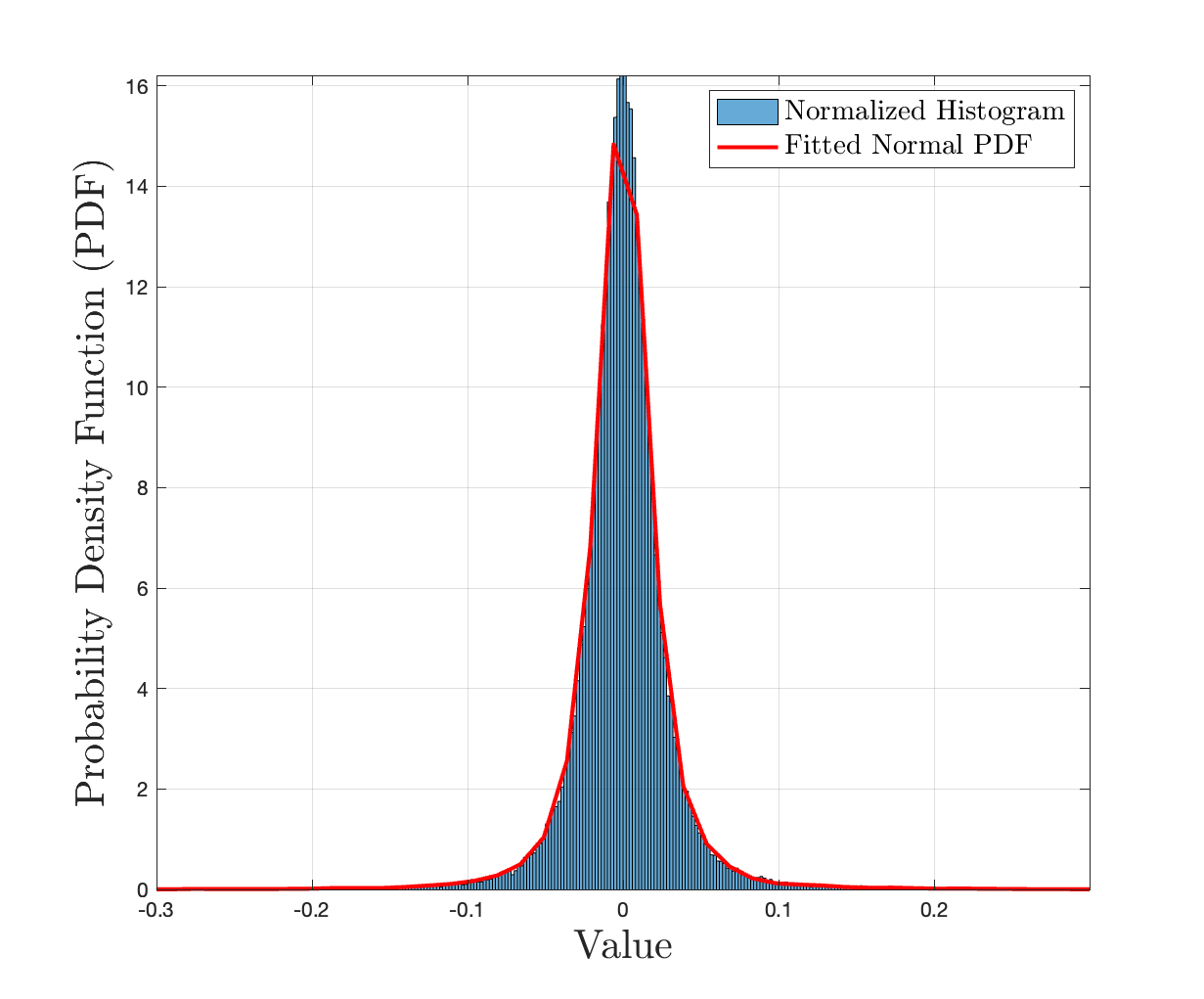}
        \caption{Normalized histogram of $\boldsymbol{c} - \boldsymbol{G}\boldsymbol{p}$ overlaid with a fitted normal distribution with zero mean and standard deviation $\sigma = 0.03$. The setup assumes $M = 16$ ULA BS antennas serving $K = 4$ single-antenna UEs at an SNR of 25~dB. The channel follows the Rician fading model described in Section~
\ref{sec:numericalchannelModel}. }
 \label{fig:histogram} \vspace{-4mm}
\end{figure}
The EP algorithm comprises two stages. First, we calculate the Gaussian approximation of the posterior distribution. Second, we calculate the precoder estimates based on the Gaussian approximation and update the parameters to refine the Gaussian approximation. The EP iteratively refines the Gaussian approximation until the convergence criterion is met.


\begin{itemize}
    \item \textbf{Gaussian approximation:} 
The EP framework is used to iteratively approximate $ p(\vect{p}|\vect{c}) $ in \eqref{eq_EP:Posterior_ori}. This is done first by approximating the two following distributions. \cite{Jespedes-TCOM14}: 
\begin{enumerate}
    \item The likelihood distribution $\mathcal{N} \left( \vect{c}: \vect{G} \vect{p} , {\sigma}^2 \vect{I} \right)$ by $ \mathcal{N}  \left( \vect{p}: {\vect{G}}^\dagger \vect{c} , \left( \vect{G}^{\Ttran} (\hat{\sigma}^2 \vect{I})^{-1} \vect{G} \right)^{-1} \right)$, which can be viewed as a least-squares transformation of the likelihood distribution.
    \item The a priori distribution $p(\vect{p})$  by an exponential family distribution, $\chi^{(t)}(\vect{p})\propto \mathcal{N} \left(\vect{p}: (\boldsymbol{\lambda}^{(t)})^{-1} \boldsymbol{\gamma}^{(t)}, (\boldsymbol{\lambda}^{(t)})^{-1}  \right)$, where $\boldsymbol{\lambda}^{(t)} $ is a diagonal matrix $K \times K$ with diagonal elements $\lambda_{m}^{(t)}>0 $  and $ \boldsymbol{\gamma}^{(t)} = [\gamma_{1}^{(t)}, \dots, \gamma_{M}^{(t)}]^{\Ttran}$. Both $\lambda_{m}^{(t)}$ and $\gamma_m^{(t)}$ are updatable parameters with $t$ denoting the iteration index. The parameters are used to enable the Gaussian approximation of $p(\vect{p}|\vect{c})$ and are initialized as $\lambda_{m}^{(0)}=1$ and $\gamma_m^{(0)}=0$. 
\end{enumerate}
Based on these approximations, the EP algorithm computes a tractable Gaussian distribution $ p^{(t)}(\vect{p}|\vect{c})$, which results from relaxing the discrete prior and replacing it with a Gaussian prior, parametrized by $\left(\boldsymbol{\gamma}^{(t)},\boldsymbol{\lambda}^{(t)}\right)$, expressed as
\begin{flalign}\label{eq_EP:Post_approx}
p^{(t)}(\vect{p}|\vect{c}) = & \mathcal{N}  \left( \vect{p}: {\vect{G}}^\dagger \vect{c}, \left( \vect{G}^{\Ttran} (\hat{\sigma}^2_{(t)} \vect{I})^{-1} \vect{G} \right)^{-1} \right)  \notag \\
&\cdot \mathcal{N}  \left(\vect{p}: (\boldsymbol{\lambda}^{(t)})^{-1} \boldsymbol{\gamma}^{(t)}, (\boldsymbol{\lambda}^{(t)})^{-1}  \right)\notag \\
\propto &\mathcal{N}  \left( \vect{p}:\boldsymbol{\mu}^{(t)}, \boldsymbol{\Sigma}^{(t)} \right),
\end{flalign}
where $\hat{\sigma}^2_{(t)}$ is the approximation of the  variance ${\sigma}^2$ in iteration $t$, and it is initialized as $\hat{\sigma}^2_{(0)}=1$. By applying the Gaussian product property,\footnote{The product of two Gaussian results in another Gaussian as given in \cite[App.~A.1]{Rasmussen-BOOK}, $\mathcal{N}_{\mathbb{C}}(\vect{p}:\vect{a},\vect{A}) \cdot \mathcal{N}_{\mathbb{C}}(\vect{p}:\vect{b},\vect{B})  \propto \mathcal{N}_{\mathbb{C}} (\vect{p}:(\vect{A}^{-1}+\vect{B}^{-1})^{-1}(\vect{A}^{-1} \vect{a} + \vect{B}^{-1} \vect{b}),(\vect{A}^{-1}+\vect{B}^{-1})^{-1}$.} we compute the product of two Gaussians in \eqref{eq_EP:Post_approx} and obtain the covariance matrix and mean vector of $p^{(t)}(\vect{p}|\vect{c}) $ as
\begin{subequations} \label{eA1_a0102}
            \begin{align}
&\boldsymbol{\Sigma}^{(t)} =  {\left(\vect{G}^{\Ttran} (\hat{\sigma}^2_{(t)} \vect{I})^{-1} \vect{G}+ \boldsymbol{\lambda}^{(t)} \right)}^{-1} ,\label{eA1_a01}\\
& \boldsymbol{\mu}^{(t)} =\boldsymbol{\Sigma}^{(t)} {\left(\vect{G}^{\Ttran}  (\hat{\sigma}^2_{(t)} \vect{I})^{-1} \vect{c} + \boldsymbol{\gamma}^{(t)}\right)}. \label{eA1_a02}
            \end{align}
        \end{subequations}
In what follows, we describe how the parameters $\vect{\gamma}^{(t)}$ and $\vect{\lambda}^{(t)}$  are iteratively computed to update the Gaussian approximation.
\item {\textbf{Precoder estimates and parameters update}:} 

We compute the cavity distribution in the EP algorithm \cite{Jespedes-TCOM14}. \footnote{The term `cavity' refers to the removal of the contribution of a specific factor (self-related information) from the Gaussian approximation distribution.} Given 
$\chi^{(t)}({p}_m)\propto \mathcal{N} \left({p}_m: {\gamma_m}^{(t)}/{(\lambda_m)}^{(t)} , (1/{(\lambda_m)}^{(t)})  \right)$ and the $m$-th marginal of $p^{(t)}(\vect{p}|\vect{c}) $, i.e.,  $\mathcal{N}  \left( {p}_m:{\mu}^{(t)}_m, {\Sigma}^{(t)}_{m,m} \right)$, the cavity marginal is computed as
 \begin{flalign}\label{eA1_a0304raw}
u^{(t)\backslash{m}}  &=\frac{\mathcal{N}  \left( {p}_m:{\mu}^{(t)}_m, {\Sigma}^{(t)}_{m,m} \right)}  {\mathcal{N} \left({p}_m:({\gamma}_m^{(t)}/{\lambda}_m^{(t)}), (1/{\lambda}_m^{(t)}) \right)},\notag\\
& \propto \mathcal{N}  \left( {p}_m: {p}^{(t)}_{{\rm obs},m}, {v}^{(t)}_{{\rm obs},m} \right),
 \end{flalign}
 where
 \begin{subequations} \label{eA1_a0304}
            \begin{align}
&v_{{\rm obs},m}^{(t)} =  \frac{\Sigma_{m}^{(t)} }{1- \Sigma_{m}^{(t)}  \lambda_{m}^{(t)}},  \label{eA1_a03}\\
&p_{{\rm obs},m}^{(t)}  = v_{{\rm obs},m}^{(t)}  {\left(\frac{\mu_{m}^{(t)}}{\Sigma_{m}^{(t)}}-\gamma_{m}^{(t)}\right)}.  \label{eA1_a04} 
            \end{align}
        \end{subequations}
Here, $\mu_m^{(t)}$ is the $m$-th element of vector $\boldsymbol{\mu}^{(t)}$ and $\Sigma_m^{(t)}$ is the $m$-th diagonal element of matrix $\boldsymbol{\Sigma}^{(t)}$. We can now compute a new posterior distribution defined over the discrete constellation space, formed by multiplying the cavity distribution with the discrete prior induced by the constellation. The first two moments of this posterior are then given by:
\begin{subequations}\label{eA1_b0102}
\begin{equation}\label{eA1_b01}
\hat{{p}}^{(t)}_m =  \sum_{l \in \mathcal{L}}  l \times u^{(t)\backslash{m}}(p_m=l),
\end{equation}
\begin{equation}\label{eA1_b02}
v_m^{(t)} =  \sum_{l \in \mathcal{L}} \left( x_m - \hat{x}_m^{(t)} \right)^2 \times u^{(t)\backslash{m}}(p_m=l).
\end{equation}
\end {subequations}
Using the first two moments of the new posterior distribution, we refine the Gaussian prior distribution i.e.,  $\chi^{(t)}(\vect{p})$ in \eqref{eq_EP:Post_approx} \cite{Jespedes-TCOM14}:
 \begin{flalign} \label{eq:EP_inference_recon}
\chi^{(t+1)}(\vect{p}) &= \frac{\mathcal{N}  \left( \vect{p}:  \hat{\vect{p}}^{(t)}, \vect{V}^{(t)} \right)}{\mathcal{N} \left( \vect{p}: \vect{p}^{(t)}_{{\rm obs}},\vect{V}^{(t)}_{{\rm obs}}\right) } \notag \\
 &= \mathcal{N}  \left(  \vect{p}: (\boldsymbol{\lambda}^{(t+1)})^{-1} \boldsymbol{\gamma}^{(t+1)},  (\boldsymbol{\lambda}^{(t+1)})^{-1}   \right),
\end{flalign}  
where $\vect{V}^{(t)}$ is a diagonal matrix with its $m$-th diagonal element is $v^{(t)}_m$, $\vect{V}^{(t)}_{{\rm obs}}$ is a diagonal matrix with its $m$-th diagonal element is $ v_{{\rm obs},m}^{(t)}$,   $\vect{p}^{(t)}_{{\rm obs}} = [{p}^{(t)}_{{\rm obs,1}}, \cdots, \vect{p}^{(t)}_{{\rm obs},M}]$, and
\begin{subequations} \label{eA1_b0304}
            \begin{align}
&\boldsymbol{\lambda}^{(t+1)} = (\vect{V}^{(t)})^{-1} -  (\vect{V}^{(t)}_{{\rm obs}})^{-1},  \label{eA1_b03}\\
&\boldsymbol{\gamma}^{(t+1)} =  (\vect{V}^{(t)})^{-1} \hat{\vect{p}}^{(t)} - (\vect{V}^{(t)}_{{\rm obs}})^{-1}   \vect{p}^{(t)}_{{\rm obs}}. \label{eA1_b04}
            \end{align}
\end{subequations}
 The Gaussian posterior distribution in the next iteration ${p}^{(t+1)}(\vect{p}|\vect{c})$ is calculated on the basis of the refined prior distribution, parameterized by $\boldsymbol{\lambda}^{(t+1)}$ and $\boldsymbol{\gamma}^{(t+1)}$. 
We smoothen the update of $(\boldsymbol{\lambda}^{(t+1)},\boldsymbol{\gamma}^{(t+1)})$ by using a linear combination with the former values:
\begin{subequations} \label{eq:damping}
\begin{align}
   \boldsymbol{\lambda}^{(t+1)} &\leftarrow  (1-\eta)\boldsymbol{\lambda}^{(t+1)}+\eta \boldsymbol{\lambda}^{(t)}, \label{damping_lambda} \\
   \boldsymbol{\gamma}^{(t+1)} &\leftarrow  (1-\eta)\boldsymbol{\gamma}^{(t+1)}+\eta\boldsymbol{\gamma}^{(t)}\label{damping_gamma},
\end{align}
\end{subequations}
where $\eta\in[0, 1]$ is a predetermined weighting coefficient. The smoothened (damped) update is introduced to improve the numerical stability and convergence of the expectation propagation algorithm, as full EP updates are not guaranteed to be stable and may lead to oscillations or divergence. Damping is a standard technique in EP that forms a convex combination of the previous and updated estimates, balancing convergence speed and robustness.

Before proceeding to the next iteration, we estimate the variance $\sigma^2$ based on the estimated precoder vector in \eqref{eA1_b01}. Specifically, the precoder error variance is 
\begin{equation}\label{eq_var_estim}
   \hat{\sigma}^2_{(t+1)} = \frac{\|\vect{c} - \vect{G} \hat{\vect{p}}^{(t)}\|_2^2 + \epsilon^{(t)} }{M},  
\end{equation}
where \(\epsilon^{(t)} = (\hat{\sigma}^2_{(t)} - \hat{\sigma}^2_{(t-1)})^2 \) is the variance precision error, initialized to 0. Although we could use the covariance matrix \((\vect{V}^{(t)})\) to calculate the estimation variance \(\epsilon^{(t)}\), as noted in \cite{hayakawa2023noise}, the very small values of its elements do not accurately represent precision error estimates. We address this by employing a variance precision error based on the difference between variance estimations across successive iterations. 

\end{itemize}

The updated parameters, along with the estimated precoder error variance, are then used to refine the Gaussian approximation in \eqref{eq_EP:Post_approx}. The iterations are subsequently continued.
If  the maximum number of iterations $T$ has been reached, the EP iteration is completed and hard estimates are derived from the soft precoder estimates $\hat{\vect{p}}^{(T)} = [\hat{{p}}^{(T)}_1,\cdots,\hat{{p}}^{(T)}_M]^{\Ttran}$ by mapping them to the closest constellation points in $\mathcal{L}$ based on their Euclidean distances. The estimates are then converted into the complex domain to obtain the detected symbols.
We show the complete pseudocode in Algorithm \ref{A1}. 

\begin{algorithm}[t!]
\caption{EP Algorithm for Solving \eqref{eq:sdreduced notation}}
\label{A1}
\begin{algorithmic}[1]
\State \textbf{Input:} $\boldsymbol{\gamma}^{(0)} = \boldsymbol{0}$, $\boldsymbol{\lambda}^{(0)} = \vect{I}$, $\eta$, $\hat{\sigma}^2_{(0)} = 1$, $T_{\rm max}$

\FOR {$t = 1$ to $T_{\rm max}$}
    \State Compute $\boldsymbol{\Sigma}^{(t)}$ and $\boldsymbol{\mu}^{(t)}$ using \eqref{eA1_a0102}
    \State Compute $\vect{V}_{\rm obs}^{(t)}$ and $\vect{p}_{\rm obs}^{(t)}$ using \eqref{eA1_a0304}    
    \State Compute $\vect{V}^{(t)}$ and $\hat{\vect{p}}^{(t)}$ using \eqref{eA1_b0102}
    \State Update $\boldsymbol{\lambda}^{(t+1)}$ using \eqref{eA1_b03} and apply damping via \eqref{damping_lambda}
    \State Update $\boldsymbol{\gamma}^{(t+1)}$ using \eqref{eA1_b04} and apply damping via \eqref{damping_gamma}
    \State Update $\hat{\sigma}^2_{(t+1)}$ using \eqref{eq_var_estim}
\ENDFOR

\State \textbf{Output:} Hard estimates of $\hat{\vect{p}}^{(T_{\rm max})}$
\end{algorithmic}
\end{algorithm}

 \section{Alternative Precoding Schemes} \label{sec:heuristic}

Instead of optimally solving problem $\mathbb{P}_7$, several alternative approaches can be considered as baseline solutions for the iterative WMMSE problem $\mathbb{P}_3$. In the following, we introduce some of these methods.

\subsection{Quantization-Unaware Precoding}

The conventional WMMSE-based algorithms in \cite{christensen2008weighted,zhao2022rethinking} (among others) compute a precoding matrix from $\mathbb{C}^{M\times K}$ that maximizes the sum rate based on infinite resolution and the available CSI. Hence, the naive baseline approach would be to compute such a precoding matrix $\boldsymbol{P}^{\mathrm{unquantized}} \in \mathbb{C}^{M\times K}$ and then quantize each entry using $\mathcal{Q}(\cdot): \mathbb{C}^{M \times K} \to \mathcal{P}^{M \times K}$ so that the result can be sent over the fronthaul. In this case, the BBU follows Algorithm~\ref{Alg:WMMSE}  but solves $\mathbb{P}_3$ in the domain $\mathbb{C}$ instead of $\mathcal{P}$, and $\mathbb{P}_3$ becomes a continuous convex optimization problem that is solvable using any general-purpose convex solver.
We will refer to this quantization-unaware precoding approach as the \emph{Unaware} precoding. After the WMMSE algorithm has converged, the final precoding matrix is obtained by quantizing each entry as $\boldsymbol{P} = \mathcal{Q}(\boldsymbol{P}^{\mathrm{unquantized}})$.

\subsection{Quantization-Half-Aware Precoding}

Although the proposed SD-based precoding gives the near-optimal (due to the bisection method) solution for problem $\mathbb{P}_3$, EP-based precoding is not optimal but is a much faster approach. It is worth exploring the sum rate with an alternative that searches for a precoding matrix in $\mathbb{C}^{M \times K}$ for $N-1$ iterations and then uses the SD-based algorithm only for the final iteration to compare with the EP-based approach. In this case, we will first identify UE gains $\beta_k$ and weights $d_k$ that are suitable for sum-rate maximization with infinite-resolution precoding, and then compute the corresponding optimized quantization-aware precoding, which we call \emph{Half-aware}. 
\subsection{Heuristic Precoding} 
Another benchmark could be a heuristic precoding scheme that attempts to improve on the \textit{Unaware} precoding. After we quantize the precoding matrix $\boldsymbol{P}^{\mathrm{unquantized}}$ to obtain $\boldsymbol{P} $, we refine the elements sequentially, where the ordering should be done properly. We consider the
second closest quantization levels in both the real and imaginary dimensions according to Euclidean distance \cite{khorsandmanesh2023optimized}. This search gives us three alternative ways of quantizing each element in the precoding matrix $\boldsymbol{P}^{\mathrm{unquantized}}$. We call this method the \emph{Heuristic}. We propose to start by updating the column of the quantized precoding matrix corresponding to the UE $k$ with the highest \textit{generated interference} $\mathrm{GI}_k = \sum_{{i=1} , i \ne k}^K |[{\boldsymbol{H}} {\hat{\boldsymbol{P}}}]_{i,k}|^2$, where $\hat{\boldsymbol{P}} = \alpha \boldsymbol{P} = \alpha \mathcal{Q} (\boldsymbol{P}^{\mathrm{unquantized}})$ since this might improve the performance the most.\footnote{We have noticed experimentally that this leads to the largest improvement in sum rate at high SNR.} 
We evaluate the sum rate
\begin{equation}  \sum_{k=1}^K 
\log_2 \left(1+\frac{\big|[{\boldsymbol{H}} \hat{\boldsymbol{P}}]_{k,k}\big|^2}{\sum_{i=1, i  \ne k}^{K}\big|[{\boldsymbol{H}} \hat{\boldsymbol{P}}
]_{k,i}\big|^2 +\bar{N_0}} \right),
\label{eq:sumrate}
\end{equation}
for the four different ${\boldsymbol{P}}$ options obtained with $p_{k,m} \in \{ \text{four nearest points to } {P}^{\mathrm{unquantized}}_{k,m} \hspace{1mm} \text{in} \hspace{1mm} \mathcal{P} \} $ while all other elements are fixed. 
We then replace the corresponding element in $\boldsymbol{P}$ with the option that achieves the highest sum rate. The rest of the UEs are ordered based on decreasing generated interference, and the precoding elements are updated element-wise similarly.

\section{Numerical Results}\label{secnumerical}

This section compares the sum rates achieved by the aforementioned precoding approaches as a function of the SNR. The sum rate is calculated using Monte Carlo simulations for the case of Gaussian signaling.

\subsection{Channel Model} \label{sec:numericalchannelModel}
The following system configuration and parameters are assumed for the MU-MIMO setup unless specified otherwise. The AAS at the BS is equipped with $M = 16$ antennas. The number of quantization levels is $L=8$, and the number of UEs is $K = 4$.
We consider Rician fading channels,  composed of a line-of-sight (LoS) path component and a non-line-of-sight (NLoS) path component, computed as
\begin{equation}
    \boldsymbol{h}_k = \sqrt{\rho_k} \Big( \sqrt{\frac{\kappa}{\kappa+1}}\boldsymbol{h}_k^{\text{LoS}}+\sqrt{\frac{1}{\kappa +1}}\boldsymbol{h}_k^{\text{NLoS}}\Big),
\end{equation}
where $\kappa$ is the Rician factor,  set as $\kappa = 10$, while $\boldsymbol{h}_k^{\text{LoS}}\in \mathbb{C}^{M}$ and $\boldsymbol{h}_k^{\text{NLoS}} \in \mathbb{C}^{M}$ are the LoS and NLoS components, for $k$-th UE respectively. We assume that the AAS is a ULA with half-wavelength spacing; thus, the LoS channel matrix can be modeled as 
$\vect{h}_k^{\mathrm{LoS}} = \vect{a}(\Omega_k) ,$ where
\begin{align}
\vect{a}(\Omega_k) &= \left[1, e^{j\psi_{k}}, \ldots, e^{j(M-1)\psi_{k}} \right]^{\Ttran},
\end{align}
and $\vect{a}(\cdot)$ denotes the array response vectors of the AAS. The relative phase shift is defined as
$\psi_{k} = \pi \sin(\Omega_k)$. The angles $\Omega_k$ for the $k$-th UE correspond to the angle of departure from the AAS towards UE $k$. In the simulations, we generate $\Omega_k$ randomly  with uniform distribution as $\Omega_k \sim \mathcal{U}[-\pi/3, \pi/3]$.

The NLoS component $\vect{h}_k^{\mathrm{NLoS}}$ is modeled as uncorrelated Rayleigh fading with $ \mathcal{CN}(0,1)$-entries. We assume that the UEs are uniformly distributed such that the distance $d_k$ from UE $k$ to the AAS $d_k  \sim \mathcal{U}[10\,\mathrm{m}, 200\,\mathrm{m}]$. Furthermore, the path loss $\rho_k$ between the AAS and UE $k$ is modeled according to the 3GPP model outlined in \cite[Table 7.4.1-1]{3gpp2018study}, tailored for an urban microcell (UMi) environment and disregarding shadow fading as:

\begin{equation}
\rho_k = -37.5 - 22 \log_{10} \left( \frac{d_k}{1~\mathrm{m}} \right) \quad [\mathrm{dB}],
\end{equation}
where the carrier frequency is $3$\,GHz.
In the following, the SNR is defined as $\mathrm{SNR}=\frac{q }{N_0}$.

\subsection{Convergence}

\begin{figure}[!t]
        \centering
      \includegraphics[scale=0.2]{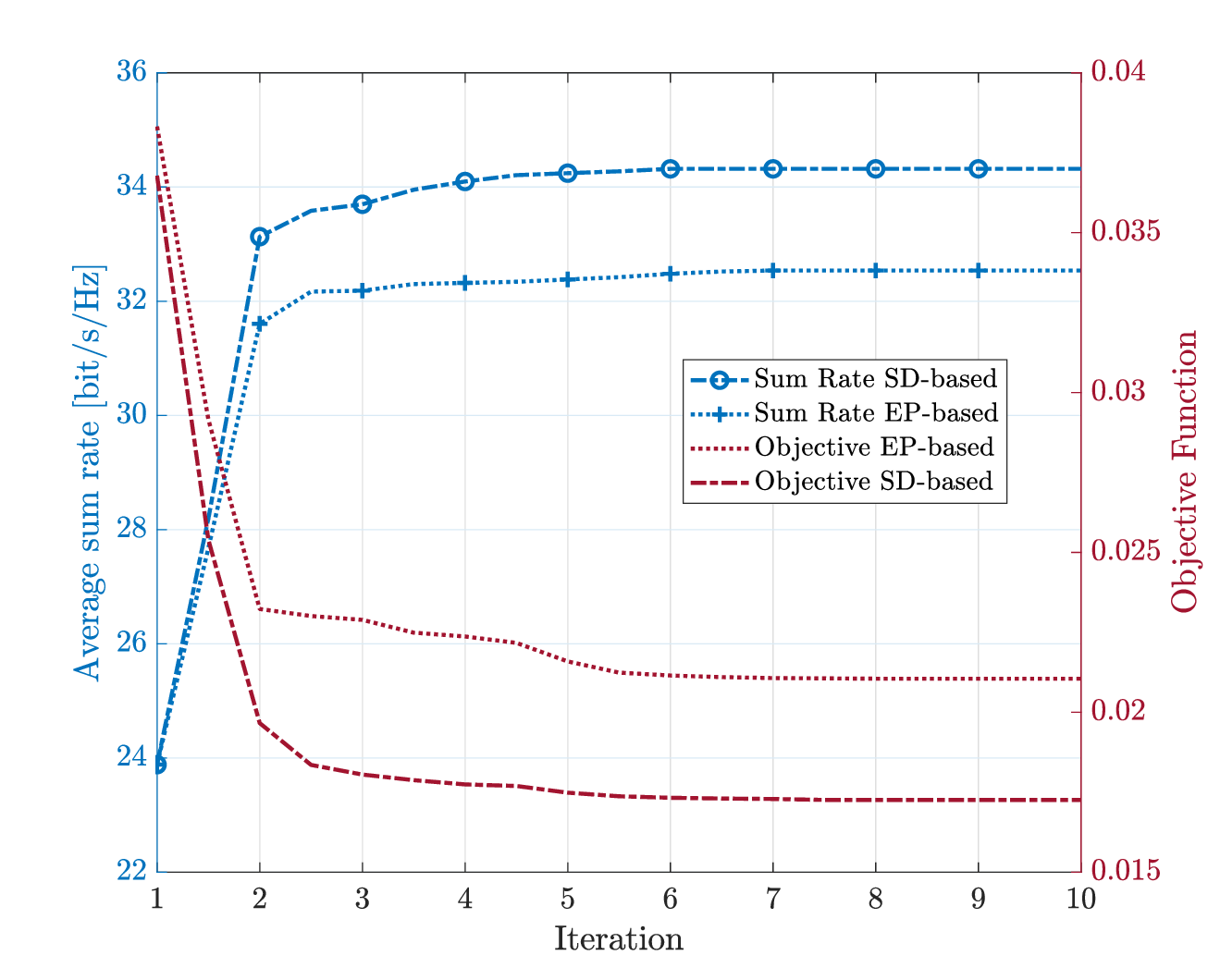}
        \caption{The average sum rate and objective function \eqref{eq:wmmse} evolution when running the proposed WMMSE algorithm using the proposed SD-based and EP-based methods.}
 \label{fig:convergence} \vspace{-3mm}
\end{figure}

Fig.~\ref{fig:convergence} presents the convergence behavior of the proposed WMMSE Algorithm~\ref{Alg:WMMSE} for the cases when $\mathbb{P}_3$ is solved using the proposed SD-based and EP-based methods. We consider $\mathrm{SNR}=20$ dB. The objective function shown on the right $y$-axis corresponds to \eqref{eq:wmmse}. The proposed SD-based and EP-based algorithms reach a stationary point after $6$ and  $7$ iterations, respectively. The objective function converges to a lower value when using the SD-based approach, indicating that the SD-based design finds a better precoding solution. As a result, the SD-based scheme also achieves a higher sum rate (shown on the left $y$-axis). The EP-based design incurs a slight performance degradation relative to its SD-based counterpart. The increase in sum rate from initialization to convergence illustrates the improvement over the sum-MSE method in \cite{khorsandmanesh2023optimized}. SD- and EP-based methods converge at comparable speeds.

 \subsection{Computational Complexity and Run-time}

\begin{table}[t!]
\centering
\caption{Complexity of the WMMSE algorithm for the different precoding schemes, where $N$ is the maximum number of WMMSE iterations, $M$ is the number of BS antennas, and $K$ is the number of UEs.}
\resizebox{0.8\columnwidth}{!}{
\begin{tabular}{|>{\centering\arraybackslash}m{2.5cm}|>{\centering\arraybackslash}m{5cm}|}
\hline
\textbf{Precoding Method} & \textbf{Computational Complexity} \\
\hline
\textbf{SD-based} & $O(NKL^{2\gamma M})$ \\
\hline
\textbf{EP-based} & $O(NK^3)$ \\
\hline
\textbf{Half-aware} & $O(NKM + KL^{2\gamma M})$ \\
\hline
\textbf{Heuristic} & $O(NMK)$ \\
\hline
\textbf{Unaware} & $O(NMK)$  \\
\hline
\end{tabular}
}
\vspace{-4mm}
\label{tab:complexity}
\end{table}
\begin{figure}[!t]
        \centering      \includegraphics[scale=0.23]{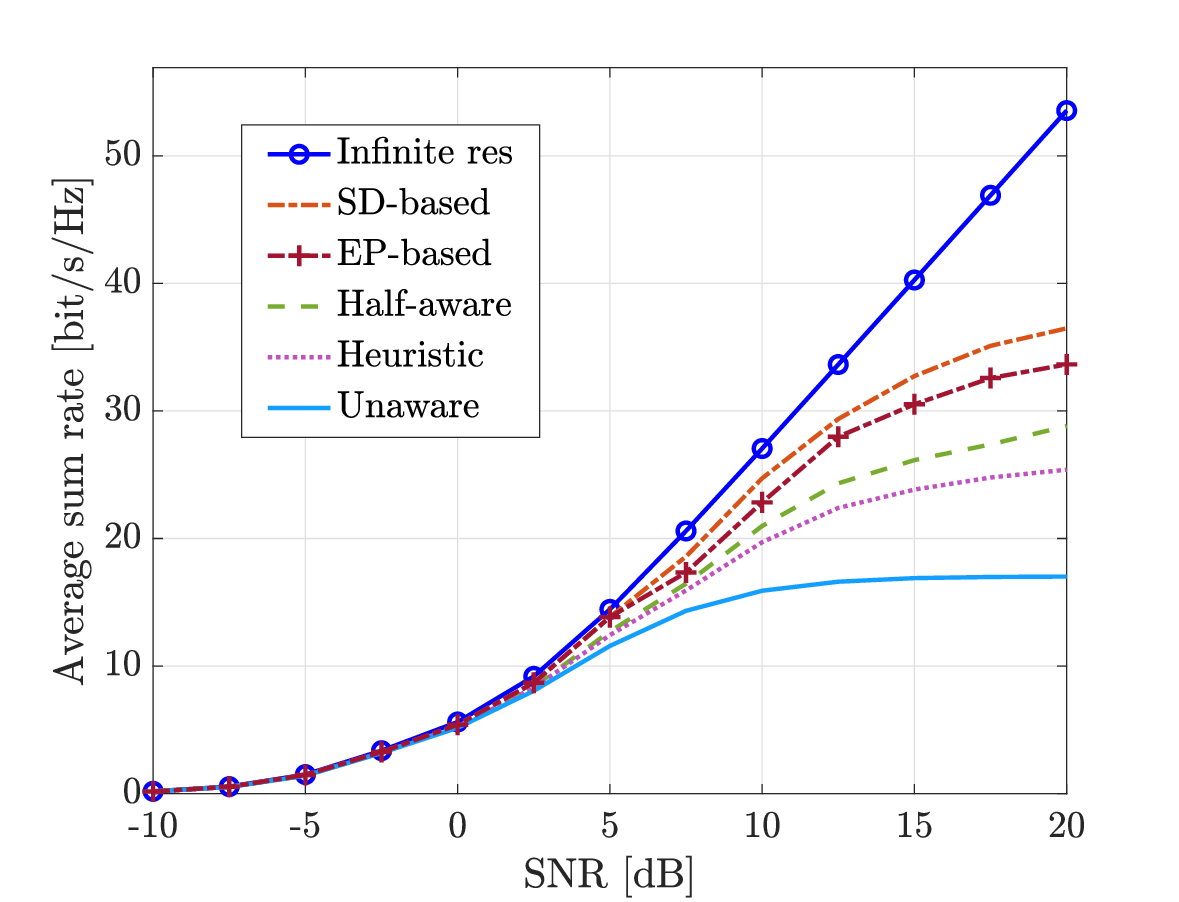}
        \caption{The average sum rate versus the SNR for different precoding schemes, assuming $M=4 \times 4$, $K=4$, and $L=8$.}
 \label{fig:sumrate} \vspace{-6mm}
\end{figure}

Before comparing the sum rate results obtained with the different precoding methods, it is worth studying their complexity, because there is a fundamental tradeoff between communication performance and complexity. 

The complexity of the WMMSE algorithm mainly originates from optimizing the precoding vectors $ \vect{p}_i$ for $k, i = 1,\ldots, K$, which depends on the algorithm that is used to solve the subproblem $\mathbb{P}_7$. Utilizing the SD algorithm has $O(K L^{2\gamma M})$ for some $0 < \gamma \leq 1$. Note that the value of $\gamma$ depends on problem-specific parameters, such as the channel statistics, and cannot be trivially determined  \cite{jalden2005complexity}. The complexity of EP depends on the size of the matrix $\mathbf{G}$, which is $K \times K$ in our case, and as we solve the problem for $K$ UEs, it becomes cubic \cite{wang2020expectation}. The heuristic method solves the same problem as \emph{Unaware}, but repeated four times. $N$ denotes the maximum number of iterations of the WMMSE algorithm. Table~\ref{tab:complexity} summarizes the complexity orders of the WMMSE algorithm with different kinds of precoding schemes.

Fig.~\ref{fig:sumrate} depicts the average sum rate as a function of the SNR for the different precoding schemes. To consider a more realistic scenario to generate this figure, we model the NLoS channel as $\boldsymbol{h}_k^{\text{NLoS}} \sim \mathcal{CN}(\boldsymbol{0}_M,\boldsymbol{R}_{k})$, which is spatially correlated Rayleigh fading. The spatial correlation matrix ${\boldsymbol{R}_{k}}\in  \mathbb{C}^{M \times M} $ is generated following the local scattering model from [34,Ch.~2]. The locations of the UEs are randomly generated with the same elevation angle $\theta =0$ but uniformly distributed azimuth angles $\phi$, seen from the UPA. The top curve, \emph{infinite res}, considers the ideal case without quantization and outperforms all the quantized precoding schemes since the rate increases linearly (in dB scale) at high SNR. In all quantized precoding schemes, the sum rate converges to specific limits at high SNR since the interference cannot be canceled entirely due to the limited precoding resolution; i.e., the system is interference-limited at high SNR. The gap between \emph{infinite res} and our novel \emph{SD-based} precoding is remarkably smaller than the gap between \emph{Unaware} and \emph{infinite res}, where we quantized the precoding matrix used for \emph{infinite res}. Our algorithm provides twice the rate at high SNR. The lower-complexity \emph{EP-based}
also outperforms other approaches \emph{Half-aware}, \emph{Heuristic} and  \emph{Unaware}, but there is a slight gap to \emph{SD-based}. However, there would be a trade-off between performance in terms of sum rate and complexity to choose between \emph{SD-based} and \emph{EP-based}. The substantial gap between \emph{Half-aware} and  \emph{SD-based} demonstrates the importance of computing quantization-aware weights in the WMMSE algorithm instead of relying on those obtained with infinite resolution. 
The proposed \emph{Heuristic} precoding reaches nearly the same performance as \emph{Half-aware} and, thus,  performs vastly better than \emph{Unaware} precoding. Note that the complexity of \emph{EP-based} is cubic with $K$, while  \emph{Heuristic} and  \emph{Unaware}  are polynomial with $M$ and $K$. Therefore, all are implementable in e.g., mMIMO systems.

All reported run-times are obtained with $L=8$ quantization levels and averaged over $500$ independent channel realizations. The \emph{SD-based} method exhibits the highest computational cost, with an average run-time of $432.92$ seconds, due to the complexity of the discrete search. In contrast, the \emph{EP-based} and \emph{Unaware} schemes require only $10.09$ and $9.32$ seconds, respectively, making \emph{EP-based} approximately $40$ times faster than \emph{SD-based}. The comparable run-time of \emph{EP-based} and \emph{Unaware} methods indicates that \emph{EP-based} achieves a favorable complexity–performance trade-off and is more suitable for real-time implementation.

\begin{figure}[!t]
        \centering
      \includegraphics[scale=0.25]{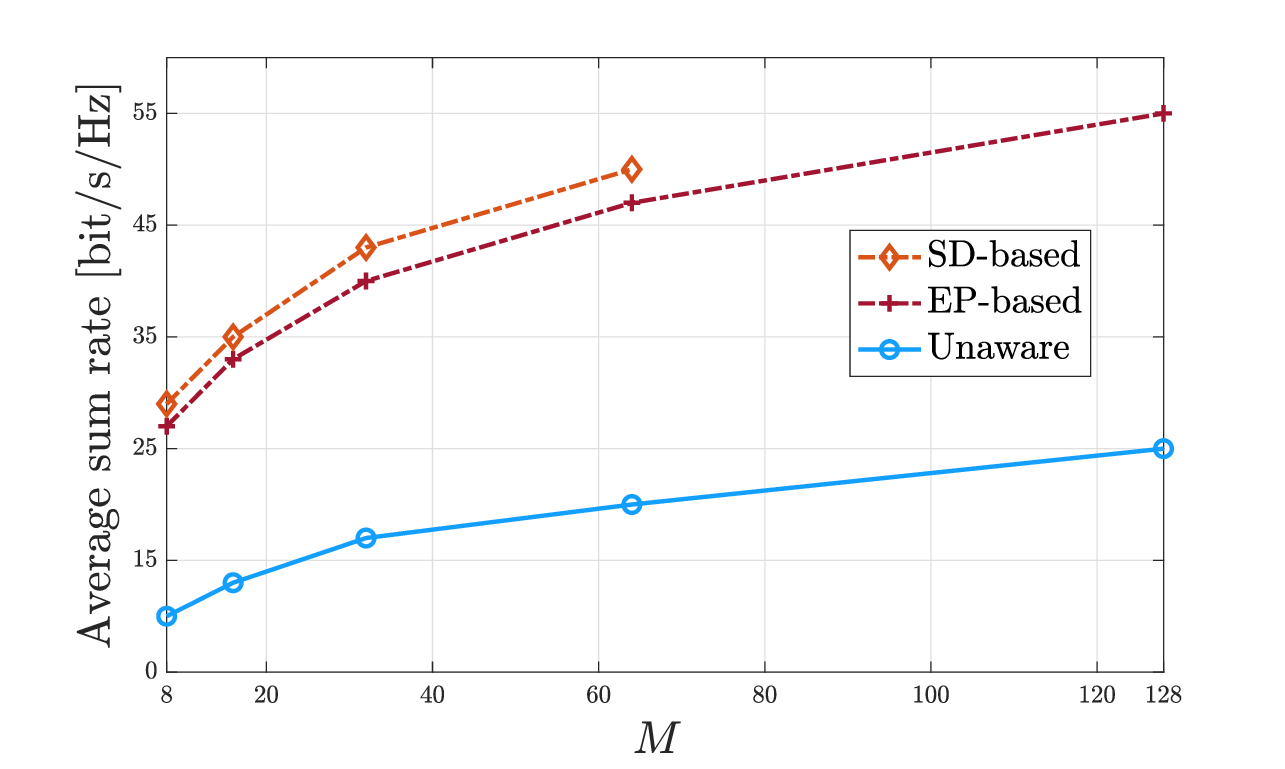}
        \caption{The average sum rate versus the BS antennas $M$ for various precoding schemes, assuming $\text{SNR} = 20$ dB, $K=4$ UEs and $L=8$ quantization levels.}
 \label{fig:BSantennas} \vspace{-5mm}
\end{figure}
Fig.~\ref{fig:BSantennas} shows the average sum rate as a function of the number of BS antennas $M$. By increasing $M$, the average sum rate of three precoding schemes \emph{SD-based}, \emph{EP-based}, and \emph{Unaware} demonstrates an upward trend thanks to increased beamforming gains and more dimensions for interference suppression. Notably, the performance gap between the schemes also widens with larger antenna arrays. While the \emph{SD-based} scheme consistently achieves the highest performance, it becomes computationally prohibitive for
$M > 64$. Compared to the \emph{Unaware} approach, the gap between the \emph{SD-based} and \emph{EP-based} precodings remains relatively small, confirming that the \emph{EP-based} method offers a favorable trade-off between complexity and performance.

\begin{figure}[!t]
        \centering
      \includegraphics[scale=0.25]{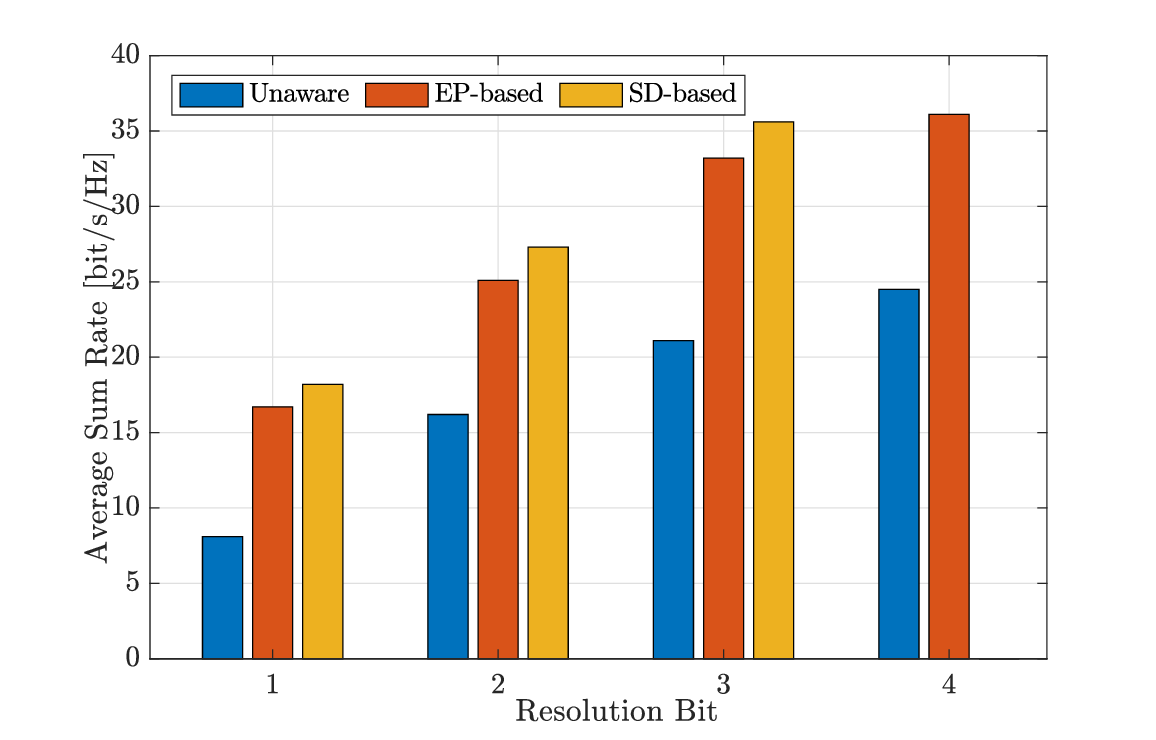}
        \caption{The average sum rate versus the resolution bit. We assume that the $\text{SNR} = 20$ dB, $M=16$, and $K=4$.}
 \label{fig:bit} 
 \vspace{-4mm}
\end{figure}
Fig.~\ref{fig:bit} shows the impact of fronthaul resolution on the average sum rate performance. As depicted, the average sum rate is improved with a higher bit resolution since it provides more flexibility in optimizing the precoder. The \emph{SD-based} method remains computationally feasible up to 3-bit resolution in the considered setup, beyond which its complexity becomes prohibitive. In contrast, the \emph{EP-based} approach maintains practical implementability across a broader range of resolutions. It consistently outperforms the \emph{Unaware} scheme, offering a favorable balance between performance and complexity. Hence, optimality comes at the cost of increased complexity, and we can reduce the complexity by sacrificing performance.

\subsection{Imperfect and Quantized CSI}\label{sec:numericalimperfect}

Fig.~\ref{fig:imperfect} investigates the impact of imperfect CSI and quantized estimated CSI, which is presented in Section~\ref{sec:estimation}. The figure shows the average sum rate as a function of the SNR for three cases: perfect CSI at the BBU (as assumed earlier in this section), imperfect CSI, which is obtained based on \eqref{eq:channelMLuplink}, and quantized imperfect CSI based on \eqref{eq:CQCSI}. We assume the same bit resolution for the CSI and precoding matrix and set $B_{\rm H} = 3$ (bits per complex entry). The BBU applies the same algorithms in all cases, while treating the CSI as perfect even if it is not. The average sum rate is still computed using the objective function in \eqref{eq:firstproblem}, assuming perfect CSI at the receiving UEs.  It can be observed that imperfect CSI leads to a reduction in the sum rate at low SNR and even more loss due to CSI quantization for the quantized estimated CSI case. However, the \emph{SD-based} precoding method still achieves a higher sum rate than the \emph{EP-based} alternatives. At high SNR, the difference in performance between the perfect and imperfect CSI cases becomes negligible because the quality of the CSI improves with the SNR. The quantized estimated CSI still has a gap compared to the other two cases. This shows that even though we assumed perfect CSI when designing our algorithms, the same methodology can be applied in scenarios with imperfect CSI and quantized ones. The impact of CSI imperfections is significant only at low SNR, whereas for quantized estimated CSI, the gap remains due to the quantization error of CSI.

\begin{figure}[!t]
        \centering
        \includegraphics[scale=0.19]{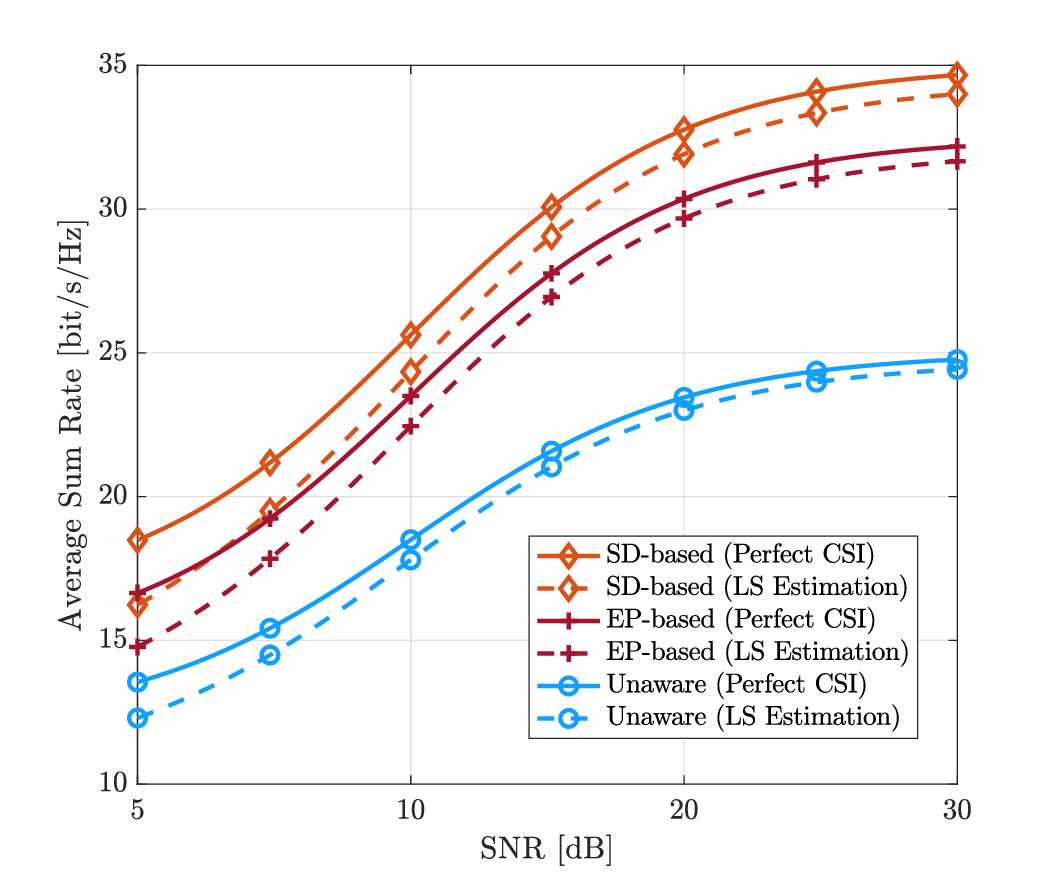}
        \caption{Average sum rate versus the  SNR with either perfect or imperfect CSI at the BBU, with
        $M=16$, $K=4$, and $L=8$.}
    \label{fig:imperfect}
        \vspace{-5mm}
\end{figure}

\subsection{Weighted Sum Rate Maximization}

The proposed Algorithm~\ref{Alg:WMMSE} can also be applied to weighted sum rate maximization, as discussed in Remark~\ref{rem:weighted}. In Fig.~\ref{fig:wightedsumrate}, we examine the impact of varying UE weights on the weighted sum rate. The simulation setup includes a BS equipped with $M=32$ ULA antennas serving  $K = 8$ UEs and quantization with 
$L=8$ levels. The average weighted sum rate is plotted against the SNR. We compare two weighting scenarios: (1) randomly assigned UE weights, where we uniformly select integers from the set $\{1, 2\}$ \textcolor{black}{ and normalize them such that the sum is equal to $k$}, and (2) uniform weights, where all UEs are assigned equal priority. The results demonstrate that the proposed algorithm can effectively prioritize UEs with higher weights. Specifically, in the case of unequal weights, the system achieves a higher weighted sum rate across all transmit power values compared to the equal-weight scenario, validating the algorithm's ability to leverage UE prioritization for enhanced resource allocation. Moreover, we evaluate \emph{SD-based}, \emph{EP-based}, and \emph{Unaware} precoding. The \emph{SD-based} and \emph{EP-based} schemes consistently achieve optimal and near-optimal performance, respectively. The marginal performance gap between \emph{SD-based} and \emph{EP-based} also suggests that the computationally simpler EP scheme may be a viable alternative in practical deployments where complexity is a concern.

\begin{figure}[t]
    \centering
    \includegraphics[scale= 0.25]{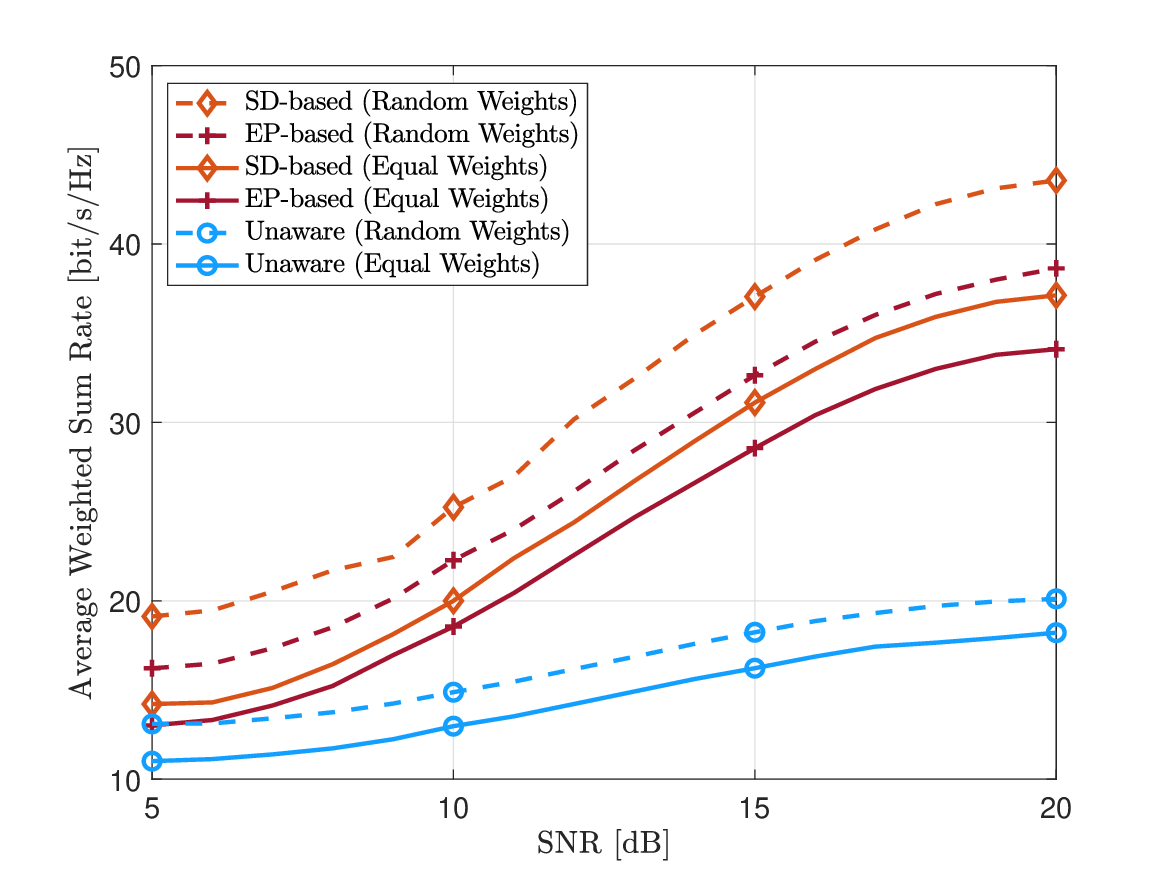}
    \caption{Weighted sum rate versus maximum downlink power with either equal weights or normalized random weights {1,2}.}
\vspace{-5mm}\label{fig:wightedsumrate}
\end{figure}

\section{Conclusions} \label{sec:concl}

A 5G site often consists of an AAS connected to a BBU via a digital fronthaul with limited capacity. Hence, the precoding matrix that is computed at the BBU must be quantized to finite precision. We have proposed a novel WMMSE-based framework for quantization-aware precoding that finds a local optimum to the sum rate maximization problem. Moreover, a reduced-complexity SD-based algorithm was proposed to find an approximate but tight solution.
We demonstrated that the sum rate can be doubled at high SNR by being quantization-aware when selecting the weights in the WMMSE formulation and the precoding matrix. Besides, an alternative quantization-aware EP-based precoding was proposed to outperform the quantization-unaware baseline with comparable complexity, making it feasible for larger setups. Finally, the framework can also be easily modified to solve  \emph{weighted} sum rate problems.

\bibliographystyle{IEEEtran}
\bibliography{IEEEabrv,references}

\end{document}